\documentclass[12pt,a4paper]{article}

\usepackage{cmbright,fullpage}
\usepackage{amsmath}%
\usepackage{amssymb}%
\usepackage{epsfig}%
\usepackage{amsthm}%
\usepackage{dsfont}%
\usepackage{natbib}
\usepackage{url}%
\usepackage{mathrsfs}%
\usepackage{braket}%
\usepackage{accents}
\usepackage{booktabs}
\usepackage{eurosym}
\usepackage{float}
\usepackage{placeins}

%imported bits below
%\usepackage{cite}
%\usepackage{biblatex}%Imports biblatex package
\usepackage{amssymb,latexsym,amsmath,soul,mathptmx,mathrsfs,verbatim}     % Standard packages
\usepackage[amsthm,mathsfit]{libertinust1math}
\newtheorem{rem}{\bf{Remark}}

%imported bits above
\bibliographystyle{artbibst}

\setlength\paperwidth{210mm}
\setlength\paperheight{297mm}
\setlength\textwidth{165mm}
\setlength\textheight{228mm}
\advance\voffset by 0pt
\advance\hoffset by 0pt
\setlength\oddsidemargin{0in}
\setlength\evensidemargin{0in}
\setlength\baselineskip{12pt plus .1pt minus .1pt}
\setlength\parindent{1pc}
\setlength\headheight{0pt}
\setlength\topmargin{0cm}
\setlength{\skip\footins}{13pt}

\renewenvironment{proof}[1][Proof]{\noindent\textbf{#1.} }{\ \rule{0.5em}{0.5em}}

\theoremstyle{plain}%
%[chapter]
\newtheorem{lemma}{Lemma}

\newtheorem{cor}{Corollary}%[section]

\newtheorem{proposition}{Proposition}%[section]

%[section]

\theoremstyle{definition}%
\newtheorem{definition}{Definition}%[section]

%[section]

%\renewenvironment{proof}[1][Proof]{\textbf{#1.} }{\ \rule{0.5em}{0.5em}}
%\renewcommand{\thesection}{\Alph{section}}

% Redefine the appendix environment
\renewenvironment{appendix}%
{\setcounter{section}{0}%
\numberwithin{equation}{section}%
}%
{}%
%additionals
%\def\jelcode#1{\par\addvspace{0pt}\noindent\keywordsfont{\itshape JEL Classification:}\enspace#1\par\vspace{17pt}}
\begin{document}

  \title{The Risk-Neutral Equivalent Pricing of Model-Uncertainty
\thanks{The author thanks Prof. J. Thijssen, Dept. of Mathematics, University of York, for his patient advice.}
}
\author{KEN KANGDA WREN$^1$\\
$^1$Dept. of Mathematics, University of York}
\maketitle
\begin{abstract}
Existing approaches to asset-pricing under model-uncertainty adapt classical utility-maximisation
frameworks and seek theoretical comprehensiveness. We move toward practice by considering binary model-risks and by emphasizing 'constraints' over
 'preference'. This decomposes viable economic asset-pricing into that of model and non-model risks \emph{separately}, leading to a unique and convenient model-risk pricing formula. Its parameter, a dynamically conserved constant of model-risk inference, allows an integrated representation of \emph{ex-ante} risk-pricing and bias such that their \emph{ex-post} impacts are disentangled via well-known anomalies, Momentum and Low-Risk, whose risk-reward patterns acquire a fresh significance: peak-reward reveals \emph{ex-ante} risk-premia, and peak-location, bias.
\end{abstract}
%\jelcode
\emph{JEL Classification:} {C11, C12, C22, D81, G02, G12}
\newpage
\begin{center}
    {\huge Conflict-of-Interest Disclosure Statement}
\end{center}
\textbf{Author K.K.Wren}\\
I have nothing to disclose.
\newpage
\section*{Introduction}

\subsection*{Motivation and Results}
We study the pricing of model-risks that may be subject to uncertainty in the sense of \cite{Kni} and \cite{Bew}, where 'uncertainty' refers to randomness governed by probability laws whose parameters are \emph{unknown}. The aim is to move closer to practice and to examine relevant issues missed by existing literature. For our purpose, 'risk' shall refer to randomness generally, with laws known or unknown, and 'model', some chosen probability law.

Our approach exploits potential risk and data hierarchies, exposes model-risk as a dominant source of value gained/lost in the market, and furnishes ways to disentangle risk-pricing from bias, through price-anomalies such as Momentum (\cite{jt}) and Low-Risk (\cite{vol-mo1}), without ad hoc criteria for what part of an observed drift is 'by design' and what, 'by mistake'. The bias identified this way happens to be Status Quo Bias, a well-known behavioural trait (\cite{sq1}, \cite{sq2}), with a prominent role in Knightian Decision Theory (\cite{Bew}), making our framework a tool for the exploration of these concepts in practical and concrete terms.

Existing works adapting classical theories to incorporate model-risk come under the heading of \emph{ambiguity-aversion} or \emph{robust-control}, where agents perform recursive optimisation, against goals such as \emph{max-min} utility, optimal \emph{smooth-ambiguity} or \emph{variational} preference, over the set of model choices (e.g. \cite{miao} or \cite{hndbk}). Related also is \emph{parameter learning} (e.g. \cite{GTim}, \cite{CD}), often performed under some known model-risk law.

Existing focus is the nature of preference/utility under model-risk and the resulting general equilibrium. We focus here instead on how model-risk pricing might be done or characterised 'minimally', in terms of assumptions and parameters. A natural start is thus the Risk-Neutral Equivalent (RNE) formulation, given the First Theorem of Asset Pricing (FTAP)\footnote{Early max-min utility struggled with FTAP-compliance (e.g. \cite{ep95}). Its modern form (e.g. \cite{ep02}) and other approaches have no such issues (e.g. \cite{miao}), especially when reformulated under the banner of smooth-ambiguity (Proposition 3.6 of \cite{knoarb}).}.

The RNE approach to model-risk has been seldom discussed, for two good reasons. Firstly, \emph{viability}, No Arbitrage in the FTAP sense, is too weak for most purposes: there are far more {viable price processes} than \emph{economically valid} ones. Secondly, parameter learning makes RNE probabilities path-dependent and intractable, even in simple cases, as observed by \cite{GTim}. We overcome these difficulties by tracing their origins and examining the consequences of the constraints they impose on asset-pricing.

The key turns out to be whether model-risk pricing is a standard Ito process of the model-risk inference process, or equivalently, whether the economic price-of-model-risk has an identical RNE form. It leads to a unique and intuitive model-risk pricing formula, whose parameter, a constant of motion of model-risk inference, allows an integrated treatment of \emph{ex-ante} risk-premia and bias such that their \emph{ex-post} effects, only jointly observable, may be disentangled. The RNE approach can apply at the aggregate or the individual-asset level, thereby potentially expanding the data universe helpful to and addressable by model-risk/parameter-learning studies, which are currently closed to stock-level or cross-sectional data.

\subsection*{Scope and Premise}
The existence of {representative agents/beliefs} is taken for granted, so the language of rational (Bayesian) inference is used, to mean no more than adherence to the laws of probabilities; likewise, 'rational' and 'expectation' signify the same without implying 'objective truthfulness', unlike in classical Rational Expectation (RE) frameworks. We thus work with three types of probabilities: 1) the objectively true ones governing events; 2) the subjective ones representing inferential beliefs about the objective world; and 3) the RNE ones for pricing.

Let our model-risk \(B\) have two outcomes, \(\{B=b\}\) or \(\{B=\overline{b}\not=b\}\), and our \emph{\(B\)-sure models}, free of model-risk. The \(B\)-sure processes represent 'routines', which by classical RE principles must be well-modelled. Only change may create model-risk (e.g. new CEO, R\&D results, macroeconomic, regulatory or geopolitical events). Non-binary situations that can be handled within the binary framework emerge quite naturally, as will be demonstrated in an application addressing typical questions encountered in practice.% and discussed further at the end%, by allowing \emph{composite} \(B\)-sure processes ('containing discounts for other model-risks')

\begin{rem}\label{pxedrisk0}
        Some risks of this type may be diversified away, or replicable and so hedged out. Our study concerns only the economically priced: any deemed worthy of being given at each moment a risk-premium via some recursive optimisation scheme with said risk as an input.
\end{rem}

% \emph{Only} these can be given viable and consistently signed risk-premia, as it turns out. Existing theories, being universal, are indifferent to such features, although all worked examples in the literature do have this consistency.

Model-risk \(B\) is to be treated as one-off. How it 'flares up' and the pricing of 'flare risk' are implicit; that is, we study episodic 'model crises', one at a time. Such a backdrop puts the core issues in sharp relief and is a good approximation of both real-world settings and the favourite of the literature, Markov hidden regime-switching that is \emph{rare}. Let horizon or maturity be long enough for meaningful model-risk inference to develop, and let asset-pricing be \emph{time-homogeneous} ('identical risk-levels identically priced') during such periods.

The law of our model-risk, the unconditional distribution of \(B\)-outcomes, may be unknown. If the objectively true unconditional probability, denoted \(\mathbf{p}^B_0\) throughout, is known, model-detection (learning) and asset-pricing remain classical, with familiar conclusions. If unknown, the same may involve non-classical cognitive and economic elements. Nonetheless, asset-pricing may be formulated in terms of classical inference, with non-classical aspects reflected through preference and discount mechanisms; this is the usual strategy and ours.

Our main point of departure from existing treatments is the separation of model-risk from the rest, the \emph{non-model risks}, in terms of its inference and discounting, in order to arrive at a tractable characterisation of RNE asset-pricing under model-risk despite path-dependency. This separation is mirrored in our information structure through the presence of \emph{data exclusive to model-risk}, as opposed to the standard one, which is generated by the irreducible, direct, inputs to asset-worth alone (e.g. dividends). It mimics common scenarios where data about model-risk (e.g. economic indicators) are far more varied and frequent than the realisations of irreducibles. It also enriches the modelling and analysis of %\emph{uncertainties with sure pre-horizon resolution}. 
market-price behaviours under uncertainties about factors other than the direct changes to asset-value; such uncertainties and factors, although common in practice, are either ignored or implicit in standard theories.%, which require model or parameter risks to stay unresolved, or as in Hansen-Sargent's robust-control, resolved at infinity (e.g. \cite{hs1}), since risks surely resolvable pre-horizon in a standard information structure are replicable/hedgeable and so priced by arbitrage. Our setup accommodates unreplicable model-risks that are however sure to resolve continuously before horizon or maturity.  %\emph{without} pre-horizon certainty of pre-or-at-horizon resolution; see also Item-\ref{cmpltness}, Appendix \ref{infps}).
%so our market remains complete with respect to the model-risk, which yet must be priced economically (its hedging being impossible).%

%by taking 'total risk' as a whole as inputs, model and  together, aggregated across a given market, become unwieldy for bottom-up (individual-asset) or cross-sectional enquiries, especially when the model-risk(s) can be micro-economic or heterogeneous in origin. Our setup would %method, with its isolation of  and RNE formulation, prove more flexible and convenient for such applications.%It also reveals that unless a model-risk is 'consistent', no interpretable economic model-risk pricing is possible. 

%The setup as outlined above makes model-risk a prominent feature of asset-pricing, a major source of excess value, and so a focus of market effort and activities.
    
\subsection*{The Perspectives of Standard Model-Risk Studies}

Most model-risk pricing schemes, ambiguity-aversion, robust-control and some parameter-learning, achieve model-risk discounts by in effect pessimistically distorting a \emph{reference belief} (about model-risk), known as 'reference measure', 'second-order prior', or 'benchmark'.

The set of models for the state space \(\mathcal{S}\) whose model is sought may be written as \(\mathcal{I}_1(\mathcal{S})\). It is often identified with the space or a subspace of probability measures on \(\mathcal{S}\). We see it simply as an index set, labelling model candidates. Potential laws governing the model-risk may be likewise collected into some set \(\mathcal{I}_2(\mathcal{I}_1(\mathcal{S}))\). In our case: \(\mathcal{I}_1(\mathcal{S})=\{b,\overline{b}\}\cong\{Q_b, Q_{\overline{b}}\}\), one \(B\)-value for one model, some probability law \(Q_B\) of state space \(\mathcal{S}\), and \(\mathcal{I}_2(\mathcal{I}_1(\mathcal{S}))\cong(0,1)\).

A reference model-risk belief thus may be identified with a member of \(\mathcal{I}_2(\mathcal{I}_1(\mathcal{S}))\), generating learning about model-risk given the models in \(\mathcal{I}_1(\mathcal{S})\). It is in our case some \(\pi^B_0\in(0,1)\), leading to a model-risk inference process \(\{\pi^B_t\}\) based on competing models \(\{Q_b, Q_{\overline{b}}\}\).

\begin{rem}\label{setLB}
     The set \(\mathcal{L}_{\mathcal{S}}:=\{\pi^{I}_0Q_{I}|\pi^{I}_0\in\mathcal{I}_2(\mathcal{I}_1(\mathcal{S})),\text{ }I\in \mathcal{I}_1(\mathcal{S})\}\) of reference beliefs about the law of total state space \(\mathcal{I}_1(\mathcal{S})\times\mathcal{S}\) may be called 'the set of model-uncertainty', for convenience.
\end{rem}

The max-min utility approach offers the simplest solution: choose at any time the model with the worst outlook from the set \(\mathcal{I}_1(\mathcal{S})\) of alternatives (meaning in our setting choosing the pessimistic \(B\)-outcome till \(B\)-risk resolves). It is a limit of the graded optimisations used by others, where pessimism is tempered by reference beliefs (e.g. \cite{hs2} penalises distorted beliefs by their \emph{relative entropy} to a chosen benchmark%\footnote{Our \(B\)-sure models \(\{Q_b, Q_{\overline{b}}\}\) being uncertainty-free, only the second of their cost parameters is relevant.}
).

Key to all is then the reference belief, by which model-risk pricing is defined and computed. In RE and some parameter-learning settings reference beliefs are known to be true, leading to classical risk-discounts. Under model-uncertainty, even if the reference beliefs happen to be true objectively, \emph{perceived} uncertainty still triggers extra discounts. All worked examples in the ambiguity-aversion/robust-control literature presume the reference beliefs of their agents to be unknowingly objectively true; this in our context is hard to justify.

At the other extreme, some parameter-learning stances see realised parameter-value as deterministic, meaning \(\mathbf{p}^B_0\equiv1\text{ or }0\) if adopted here; this would exclude many types of model-risks of practical interest. In our setting then, the norm is \(\pi^B_0\not=\mathbf{p}^B_0\in(0,1)\).

Any assertion of equality would mask a basic issue facing all attempts to free classical RE frameworks from any built-in coincidence of beliefs and true laws: pricing theory outputs are \emph{ex-ante} but what can be observed is \emph{ex-post}, so tests and calibrations are impossible without additional theories of \emph{bias}. Manifestations of this 'impasse' include 'observational equivalence' ('is the high realised risk-premium observed due to bias after all?'; see e.g. \cite{obe}) and 'joint-hypothesis' ('is an anomaly a sign of bias in the market or in one's pricing theory?'; see e.g. \cite{jhp}). We show how this issue may resolve itself in our approach.

\subsection*{Content and Organisation}
  Section \ref{su} sets out the asset and data structure in continuous time, highlighting the effect of model-risk on standard risk-pricing considerations; Section \ref{coresec} derives RNE asset-pricing under model-risk; Section \ref{app} demonstrates a potential application, followed by conclusions and discussions; technical backgrounds and proofs are collected in the Appendix.
  
  The RNE formulations of asset-pricing under model-risk are shown to be constrained by economic, informational, necessities (Definition \ref{mrr} and Lemma \ref{l1}%and Remark \ref{sameQ}
  ) to have a canonical form (Proposition \ref{p3}), resulting in familiar properties (Corollary \ref{coro1}) and dynamics (Section \ref{connect}). One implication is a potential link, given model-uncertainty, between Status Quo Bias and well-known anomalies, whose characteristic risk-reward curves can identify and separate \emph{ex-ante} risk-premia from bias (Section \ref{moeff}\&\ref{loreff}). 

\section{Setup}\label{su}

\subsection{The Usual-Case Asset and Information Structure}\label{assdef1}Consider an \emph{usual-case} asset, one with a sole payoff \(Y\in\mathbb{R}\), in logarithmic value, at a far off horizon \(T\), \(1\ll T<\infty\), with time- and risk-free discount set to nil (in a suitable num\'{e}raire):
\begin{enumerate}

\item\label{assbasic}Let \(Y\) be revealed \emph{continuously} by a public record \(\{Z_t\}\) of realised incremental changes to a known base level, set to \(\log1=0\), where each \(Z_t\) is the \emph{sum} of firmed-up changes to date, with \(Z_T\equiv Y\) by definition. Let \(\{Z_t\}\) %Any incremental change (if at all) to bullet-payoff \(Y\) is such that the \emph{cumulative} firmed-up ('announced/committed') asset worth \(\{Z_t\}\),%,
be a diffusion, based on a standard Wiener process \(\{w^Z_t\}\), generating standard Wiener space \(\mathcal{S}^Z\), measure \(w^Z\) and natural filtration \(\{\mathcal{F}^Z_t\}\), subject to outcome \(\{B=b\}\), \(b\in\mathcal{B}:=\{+,-\}\), where its \(B\)-sure drift \(\{r^{Z}_{B,t}\}\) and \(B\)-independent volatility \(\{\sigma^Z_t\}\) are \(\{\mathcal{F}^Z_t\}\)-predictable:\begin{equation}\label{z}
    dZ_t(B)=%\mathbf{1}_+(B)\cdot
    r^{Z}_{B,t}dt+\sigma^Z_tdw^Z_t,\text{ }t\in[0,\infty).
\end{equation}The Wiener space \((\mathcal{S}^Z,W^Z_B)\) of the \(B\)-sure process \(\{Z_t(B)\}\), with \(dW^Z_{B,t}=dZ_t(B)\), provides the state space and measure \((\mathcal{S}^Z|_T,W^Z_B|_T)\) of \(B\)-sure payoffs via restrictions to \(T\). Model-risk inference amounts to the standard sequential testing of Wiener processes (\cite{PShir}), assuming \({W}^Z_{+}\perp{W}^Z_{-}\) and \({W}^Z_{+}|_{t}\sim{W}^Z_{-}|_{t}\), \(\forall t<\infty\), so that data \(\{Z_t\}\) allow \(B\)-resolution at \(T^Z_{\mathcal{B}}:=\inf\big\{t|\mathcal{F}^Z_t\ni\{B=b\}\big\}=\infty\) almost surely\footnote{Text-book cases concern constant drifts and volatility, from which tests with \(\{\mathcal{F}^Z_t\}\)-predictable parameters derive via an absolutely continuous clock-change (see e.g. \cite{clk} and also Appendix \ref{infps}).%, making finite resolutions (\(T_{\mathcal{B}}<\infty\)) possible. Such issues are largely moot under economic risk-pricing (Remark \ref{pxedrisk0}), with \(T<T_{\mathcal{B}}\le\infty\) (.
}. %\(dW^Z_{-,t}=r^Z_{-,t}+\sigma^Z_tdw^Z_t\) for \(B=-\), and \(dW^Z_{+,t}=r^Z_{+,t}dt+\sigma^Z_tdw^Z_t\) for \(B=+\); write \(r^{Z\Delta}_{t}:=r^Z_{+,t}-r^Z_{-,t}\).its drift  and is subject to 
\item\label{Yexpected}Then, given \(Z_t=z_t\), \(t<T\), the \(B\)-sure payoff expectation process \(\{Y^B_t\}:=%{\mathbf{E}}_{B}[Y|\mathbf{F}_t]\equiv
\{{\mathbf{E}}_{B}[Y|\mathcal{F}^Z_t]\}\), a well-known martingale, takes the following form:\begin{align}\label{yb}
 &Y^B_t=z_{t}+\mathbf{E}_{B}[Y-z_t|{\mathcal{F}^Z_t}]=z_{t}+\mathbf{E}_{B}[Y-z_t]=:z_{t}+y^{B}_T(t);
    \end{align}note \(y^B_T(0)\equiv0\) and \(y^B_T(t)=\mathbf{E}_B[\int_t^{T}dZ_t]=\mathbf{E}[\int_t^{T}%dZ_s(B)]=\int_t^{T}
    {r}^Z_{B,s}ds]
    \), where \(\mathbf{E}[\cdot|\cdot]\) denotes expectations under the standard Wiener measure \(w^Z\).% and \(%-\Delta_{(i,j)}y_B=%y^B_T(t)-y^B_T(t+t')=\mathbf{E}[\int_t^{t+t'}dZ_s(B)]\), \(\forall t'\le T-t\).

\item\label{constantsignstate}Define \emph{model-risk impact} \(Y^{\Delta}(t):=Y^+_t-Y^-_t=\mathbf{E}[\int_t^{T}r^{Z\Delta}_{s}ds%|\mathcal{F}^Z_t
    ]\), with \(r^{Z\Delta}_{t}:=r^{Z}_{+,t}-r^{Z}_{-,t}%{y}^{+}_T(t)-{y}^{-}_T(t)
    \). %, with \(\Delta_{(s,t)}(\cdot):=(\cdot)_{t+s}-(\cdot)_t\).
    Let it be \emph{consistent}: \(%|Y^{b-\overline{b}}_t|
    Y^{\Delta}(t)>0\), \(\forall t<T\). This is intuitive, and necessary, as it turns out, for viable and economically valid risk-pricing (Remark \ref{coro2}). %As will be seen, this is necessary for well-defined economic model-risk pricing, not just intuitive.
    %\begin{align}&sign[b\overline{b}]:=sign[Y^{b-\overline{b}}_{n}]:=sign[Y^b_{n}-Y^{\overline{b}}_{n}]=sign[Y^{b-\overline{b}}_{m}]\not=0,\\\label{ybn}&{Y}^{B}_n:={\mathbf{E}}_{B}[Y\big{|}\mathbf{F}^{\mathbf{D}}_n]:=\mathbf{E}\big[Y\big{|}B,\mathbf{F}^{\mathbf{D}}_n\big],\end{align}
    %where \(B\)-outcomes are labelled by their relative economics: \(B\in\mathcal{B}=\{+,-\}\).

\item\label{Bdata}There is a \emph{purely \(B\)-informative} real-vector dataflow \(\{{D}_t\}\) that is \emph{\(B\)-conditionally independent} to \(\{Z_t\}\) (so {price-irrelevant} if given \(B\)) based on standard Wiener process \(\{w^D_t\}\),
\begin{equation}\label{bD}
    dD_t(B)=%\mathbf{1}_+(B)\cdot
    r^D_{B,t}dt+\sigma^D_tdw^D_t=:dW^D_{B,t},\text{ }t\in[0,T^D_{\mathcal{B}}), 
\end{equation}where \(T^D_{\mathcal{B}}:=\inf\big\{t|\mathcal{F}^D_t\ni\{B=b\}\big\}\le\infty\), that is, \({W}^D_{+}\perp{W}^D_{-}\) and \({W}^D_{+}|_{t}\sim{W}^D_{-}|_{t}\), \(\forall t<T^D_{\mathcal{B}}\). As such, data \(\{{D}_t\}\) may resolve \(B\) in finite time (see Appendix \ref{infps} for its construction from standard diffusion). %Model-risk \(B\) is thus active on \([0,T^D_{\mathcal{B}}\wedge T)\). 
To be clear and to stay close to usual settings, set \(T^D_{\mathcal{B}}>T\).
\begin{rem}
    Existing studies assume, de facto, that only data about firmed-up, tangible, value %(e.g. null-data for Asset-1, or heads-count for Asset-2, of Remark \ref{coinass})
    exist or matter. By adding data exclusive to model-risk and irrelevant if model-sure, %including when tangible-value data, \(\{Z_t\}\), are absent or trivial, 
    we may tackle a frequent challenge facing partitioners: a market unsettled by a bout of model-risk whose economics dominate asset-value and so market activities, while routine risks remain, independently evolving, in the background.
\end{rem}

\item\label{inference'}Combined, asset data \(\{\mathbf{D}_t\}:=\{(Z_t,D_t)\}\) have \(B\)-sure state space \(\mathcal{S}^T:=(\mathcal{S}^Z\times\mathcal{S}^D)|_T\) and \(B\)-sure law \(\mathbf{W}^T_{B}:=\mathbf{W}_{B}|_T\), \(\mathbf{W}_{B}:={W}^Z_B\times {W}^D_B\). Overall model-risk inference takes place on filtered space \(\big(\mathcal{S}^{T}_{\mathcal{B}}:=\mathcal{B}\times\mathcal{S}^T\ni\omega_B:=(B,\omega),\text{ }\{\mathbf{F}_t\}:=\{\mathcal{F}^Z_t\}\vee\{\mathcal{F}^D_t\},\text{ } \mathbf{F}_T;\text{ } {\pi}^{B(\cdot)}_0\mathbf{W}^T_{B(\cdot)}(\cdot)\big)\), where \({\pi}^{B(\omega_B)}_0\) is the unconditional belief in outcome \(B(\omega_B):=B\). %Then, by Item-\ref{assbasic}\&\ref{Bdata}, model-risk \(B\) is active on \([0,T_{\mathcal{B}}\wedge T)\), with \(T_{\mathcal{B}}:=\inf\big\{t|\mathbf{F}_t\ni\{B=b\}\big\}={T}^D_{\mathcal{B}}\).
%\begin{rem}\end{rem}

\item\label{inference}
    Tests based on likelihood-ratios (LR) \(\{L^{+/-}_t\}:=\{\frac{d\mathbf{W}^T_{+}|_t}{d\mathbf{W}^T_{-}|_t}\}\), or log-LR \(\{l^{+/-}_t\}:=\{\log L^{+/-}_t\}\), are optimal in the sense of Neyman-Pearson Lemma. Model-risk \(B\) resolves \emph{iff.} \(\exists T_{\mathcal{B}}\le\infty\) such that \(\lim_{t\to T_{\mathcal{B}}}l^{+/-}_t=(-1)^{\mathbf{1}_{-}(B)}\cdot\infty\) (\emph{iff.} %\(lim_{t\to T_{\mathcal{B}}}\int^t_0 ds|r^Z_s|^2\to\infty\) or \(\lim_{t\to T_{\mathcal{B}}}\int^t_0 ds|r^D_s|^2\to\infty\), that is,
    \(\mathbf{W}_{+}|_{T_{\mathcal{B}}}\perp\mathbf{W}_{-}|_{T_{\mathcal{B}}}\)). Further, such inferential tests are performed under the \emph{equivalent martingale measure} of one of the processes (e.g that of \(B=-\)). Let \(\mathbf{W}_{-}\) be this base measure and so \(r^Z_{-,t}=0=r^D_{-,t}\), \(\forall t\), henceforth%, and so \(r^{Z}_{+,t}\equiv r^{Z\Delta}_{t}:=r^{Z}_{+,t}-r^{Z}_{-,t}\) and 
    .
    \item\label{inference''} The above has a Bayesian representation: any \(L^{+/-}_t\in(0,\infty)\) maps bijectively and smoothly to an \emph{a posteriori} belief \({\pi}^B_t\in(0,1)\). Bayes' Rule, in terms of \emph{odds-for}, \(O_f[{\pi}^B_t]:={\pi}^B_t/\underline{\pi}^B_t\), with \(\underline{(\cdot)}=1-(\cdot)\), says in effect 'new-odds \(=\) old-odds \(\times\) likelihood-ratios of data':\begin{equation}\label{br1}
    O_f[{\pi}^+_t]= O_f[{\pi}^+_0]\times \exp[{l^{+/-}_t}]\propto {L^{+/-}_t}\in(0,\infty),
    \end{equation}given \emph{a priori} odds \(O_f[{\pi}^+_0]\in(0,\infty)\). That is, inferential odds are exponentiated log-LR processes; the latter, \(\{l^{+/-}_t\}\), are Wiener processes under broad conditions ((\ref{w'}-\ref{rawdrift})).

\end{enumerate}
%\begin{rem}\label{coinass}        For examples, let dataflow consist of 1-dimensional diffusion \(\{h_t\}\) of drift \(\alpha_B\). Asset-1: \(\{D_t\}=\{h_t\}\) and \(Y(B)\backsim \mathcal{N}(T\alpha_{B}, \sqrt{T})\); Asset-2: \(\{Z_t\}=\{h_t\}\) and \(Y(B)=h_{t}+\big[(Y(B)-h_{t})\backsim \mathcal{N}((T-t)\alpha_{B}, \sqrt{T-t})\big]\). Asset-1 realises payoff at \(T\), where it is binary but for a Normal masking noise (i.e. \(T<T(B)\)), with data \(\{h_t\}\) informing on \(B\)-risk, not on \(B\)-sure risks. Asset-2 realises payoffs (booked for \(T\)) through time, with data \(\{h_t\}\) informing on \(B\)-risk as well as \(B\)-sure risks.   \end{rem}
\subsection{The FTAP Baseline under Model-Risk}\label{ftapbase}
By setup and FTAP, the structure \(\big(Y,\{S_t\};\text{ }(\mathcal{S}^T_\mathcal{B}, \{\mathbf{F}_t\}, \{\mathbf{F}_T\};\text{ } \pi^B_0\mathbf{W}_B^{T})\big)\), where \(\{S_t\}\) is some \(\{\mathbf{F}_t\}\)-adapted asset-pricing (so \(S_T\equiv Y\)), defines a \emph{viable market-price process} \emph{iff.} there is a so-called RNE measure \(\widehat{\pi^B_0\mathbf{W}^{T}_{B}}\sim {\pi}^B_0\mathbf{W}_B^{T}\) such that %\footnote{The original statement is about \(\{S_t\}\) being a martingale under the RNE measure; we have however \(S_T\equiv Y\).}
\(S_t=\hat{\mathbf{E}}[Y\big{|}\mathbf{F}_t]\), \(\forall t<T\), under the RNE expectation \(\hat{\mathbf{E}}[\cdot{|}\cdot]\). 

Such FTAP-viable asset-pricing under model-risk \(B\) may be written always as follows:\begin{align}\label{natrnedecom2}& S_t=\Sigma_B\hat{\pi}^B_t{\hat{S}^{B}_t}:=\hat{\pi}^+_t\hat{S}^{+}_t+\hat{\pi}^{-}_t\hat{S}^{-}_t,\\\label{natrnedecom1}&\hat{S}^{B}_t:=\hat{\mathbf{E}}_{B}[Y\big{|}\mathbf{F}_t]:=\hat{\mathbf{E}}\big[Y\big{|}B,\mathbf{F}_t\big],\end{align}where \(\Sigma_B{(\cdot)^B}{(\cdot)^B}\) denotes averages weighted by \(B\)-outcomes, such as by RNE beliefs \(\hat{\pi}^{B}_t\) under Bayes' Rule (\ref{br1}) given \(\hat{\pi}^B_0\) and RNE log-LR \(\hat{l}^{+/-}_t:=\log\frac{d\hat{\mathbf{W}}^{T}_+|_t}{d\hat{\mathbf{W}}^{T}_{-}|_t}\). %The market represented by (\ref{z}-\ref{natrnedecom1}) is complete with respect to \(\{\mathcal{F}^S_t\}\subset\{\mathbf{F}_t\}\), the filtration generated by \(\{S_t\}\) (Martingale Representation Theorem), and so its RNE measure, unique.
 %\begin{rem}\label{unhedgeable}By setup, \(\hat{S}^{B}_t=\hat{\mathbf{E}}_B\big[Y\big{|}\mathcal{F}^Z_t\big]=z_t+\hat{y}^{B}_T(t)\) given \(Z_t=z_t\), as in (\ref{yb}). Thus, \(B\)-risk, which can be measurable (resolvable) to \(%\mathbf{F}_{T_{\mathcal{B}}}\subset\mathbf{F}_T\) %and so \(\lim_{t\to T_{\mathcal{B}}<T}\sigma^{\pi}_t\to0\), . It makes ,  but not \(\mathcal{F}^Z_T\) (Item-\ref{assbasic}), is not measurable to \(\mathcal{F}^S_T\), the price-generated \(\sigma\)-algebra via (\ref{natrnedecom2}). It is not replicable and cannot be priced by arbitrage.\end{rem}
   
    Viable prices do not mean economically valid prices%(e.g. pricing based on RNE measures of the form \(\tilde{\pi}^B_0\tilde{\mathbf{W}}^{T}\sim {\pi}^B_0\mathbf{W}_B^{T}\) ignores \(B\)-inference)
    . Indeed, any pricing that is viable to any one member of the equivalence class \([{\pi}^B_0\mathbf{W}_B^{T}]\) is then viable to \emph{all} in the class; 'anything goes' in this sense. What characterise processes and RNE measures that stem from valid economic theories? For answers, consider \emph{risk-premia} \(\{RP_t\}\), the gap between pricing and expectation, which should be non-negative, at least \emph{ex-ante}: given reference model-risk belief \(\{{\pi}^{B}_t\}\),\begin{equation}\label{totrp}
      RP_t:= Y_{t}-S_{t}= \Sigma_B{\pi}^ B_tY^B_{t}-S_{t}\ge0\text{, }\forall t<T.
    \end{equation} It has RNE decomposition (by inserting \(\pm\Sigma_B\hat{\pi}^B_t{{Y}^{B}_t}\)):\begin{align}\label{rpntilde}
&RP_t=\Sigma_B\hat{\pi}^B_t{\hat{RP}^B_t}+ B\_\widehat{RP}_t,\\
\label{rpbntilde}&\hat{RP}^B_t:=Y^B_t-
\hat{S}^{B}_t,\\
\label{brpntilde}&B\_\widehat{RP}_t:=RP_t-\Sigma_B\hat{\pi}^B_t{\hat{RP}^{B}_t}=({\pi}^+_t-\hat{\pi}^+_t){Y}^{\Delta}(t),
        \end{align}where \(\Sigma_B\hat{\pi}^B_t{\hat{RP}^B_t}\) discounts \(B\)-sure (non-model) risks, and \(B\_\widehat{RP}_t\), model-risk \(B\). The latter has standard deviation \(Y^{\Delta}(t)\sigma^{\pi}_t\), \(\sigma^{\pi}_t:=(\pi^+_t{\pi}^-_t)^{\frac{1}{2}}\), which, being free of risk-pricing choices, is an intrinsic quantity. It well-defines a RNE price-of-\(B\)-risk process \(\{k_t^{\hat{\pi}}\}\): \(\forall t<T\),\begin{equation}\label{ktilde}
{k}^{\hat{\pi}}_t:=\frac{B\_\widehat{RP}_t}{Y^{\Delta}(t)\sigma^{\pi}_t}\equiv \frac{{\pi}^+_t-\hat{\pi}^+_t}{(\pi^+_t\underline{\pi}^+_t)^{\frac{1}{2}}}.\end{equation}

Decomposition (\ref{rpntilde}-\ref{ktilde}) exists whenever the RNE measure exists, that is, for all viable pricing schemes. However, its elements, \(\{\hat{S}^{B}_t\}\) and \(\{{k}^{\hat{\pi}}_t\}\), need not correspond to economically interpretable entities. It turns out that establishing such a correspondence, albeit indirectly, is the key to finding economically valid asset-prices (out of mere viable ones).

\subsection{The Economic Decomposition of Risk-Pricing under Model-Risk}\label{MRR}

  Just as any viable price process admits a RNE decomposition of the embedded risk-premia, any economic pricing scheme under model-risk, classical or otherwise, with asset-price \(\{S_t\}\), risk-premia \(\{RP_t\}\) and RNE formulation (\ref{natrnedecom1}-\ref{ktilde}), admits an \emph{economic decomposition}, (\ref{abn}-\ref{k}). It is unique and well-defined provided the existence of 1) a reference model-risk belief process \(\{{\pi}^B_t\}\); 2) viable \(B\)-sure pricing \(\{S^B_t\}\) produced under the same scheme when \(B\)-sure:\begin{align}
\label{walk}&S^B_t=\check{\mathbf{E}}_{{B}}[Y{|}{\mathbf{F}}_t]=z_{t}+\check{\mathbf{E}}_{B}[Y-z_t]=z_{t}+\check{y}^{B}_T(t),\\ 
 \label{rpb-func}&RP^B_t:=Y^B_t-S^B_t=y^B_T(t)-\check{y}^B_T(t)=:RP^B(t)\ge0,%\\\label{sbb}&,\\\label{ybb}&=S^{b-\overline{b}}(n)+(RP^b(n)-RP^{\overline{b}}(n)),%and \(\Delta_{(j,n)}{S}^{b-\overline{b}}=\check{y}_{\overline{b}}(j,n)-\check{y}_b(j,n)\)
    \end{align}where \(\check{\mathbf{E}}_{{B}}[\cdot|\cdot]\) denotes expectations under its \(B\)-sure RNE measure \(\check{\mathbf{W}}^{T}_{B}\sim{\mathbf{W}}^{T}_{B}\), leading to RNE outlook \(\check{y}^{B}_T(t)=%\check
    {\mathbf{E}}[\int_t^{T}\check{r}_{B,s}ds]\) as in (\ref{yb}), and \(B\)-sure risk-premium \(RP^B(t)\) satisfying: \(\forall t< T\),\begin{align}%\label{sbn}&,%=\mathbf{E}_{{B}}[Y\frac{d\check{Q}^{(T)}_B|_n}{dQ^{(T)}_B|_n}{|}{\mathbf{F}}_n]%\\\label{rpbn}&RP^B_n=Y^{B}_n-S^{B}_n\ge0,\\
\label{consstncy}
&\check{y}^{+}_T(t)-\check{y}^{-}_T(t)=:S^{\Delta}(t)>0,\text{ almost surely},\\\label{fits}
&S^B_t=\lim_{\sigma^{\pi}_t\to0}\hat{S}^B_t,\text{ }\text{i.e. } RP^B(t)=\lim_{\sigma^{\pi}_t\to0}\hat{RP}^B(t),\end{align} where (\ref{consstncy}) ensures consistency with asset-property Item-\ref{constantsignstate}, Section \ref{assdef1}, while (\ref{fits}), the internal consistency of the pricing scheme under model-risk \(B\).

\begin{rem}\label{MSvsnoMS}
     Model-sure usual-case pricing is simple: static noise compensated by a predictable drift. Adding model-risk, the only new input to economic pricing (recursive optimisation taking drivers of the expected preference-value as inputs) is the reference model-risk belief \(\{{\pi}^B_t\}\).
\end{rem}

At any \(t<T\), economic asset-prices satisfy \(S_t\in(S^-_t,S^+_t)\), so \(\exists{A}^B_t\in(0,1)\) such that:
\begin{align}
\label{abn}&S_t=\Sigma_B{A}^B_tS^B_t= S^-_t+A^+_tS^{\Delta}(t),\\
\label{rpdecom}&RP_t:=Y_t-S_t
= \Sigma_B{\pi}^B_t{RP^B(t)}+B\_RP_t,\text{ with }\\
\label{brpn}&B\_RP_t:=RP_t- \Sigma_B{\pi}^B_t{RP^B(t)}=\Sigma_B\pi^B_tS^B_t-S_t=({\pi}^+_t-A^+_t){S}^{\Delta}(t),\\
\label{k}&k^A_t:=\frac{{\pi}^+_t-A^+_t}
{(\pi^+_t\underline{\pi}^+_t)^{\frac{1}{2}}}\ge0,\end{align}which parallels RNE decomposition (\ref{rpntilde}-\ref{ktilde}), but \emph{pricing coefficients} \(\{A^B_t\}\in(0,1)\), unlike \(\{\hat{\pi}^B_t\}\), need not be a conditional-probability process. We show shortly that they in fact must be.

The economic price-of-\(B\)-risk \(\{k^A_t\}\) should have no explicit dependencies on \(B\)-sure prices, the latter being decisions in the absence of \(B\)-risk. Thus, once devised, process \(\{k^A_t\}\) can generate asset-prices under \(B\)-risk via (\ref{k}) and (\ref{abn}) given any consistent but otherwise arbitrary \(B\)-sure pricing \(\{{S}^B_t\}\equiv\{Y^B_t\}-\{RP^B(t)\}\) ((\ref{rpb-func})), thus separating the pricing of model and non-model risks. This is attractive if for a useful range of risk-discounts, the asset-pricing so achieved,\begin{align}
\label{twintwin}
        &%\{S_t\}\equiv
        S^{(RP|A)}_t:=\Sigma_B A^B_t{S}^B_t=S^{(0|A)}_t\text{ }-\text{ }\Sigma_B A^B_t{RP}^B(t),\text{ with }\\\label{0}&S^{(0|A)}_t:=\Sigma_B A^B_tY^B_t=Y^{-}_t+A^+_tY^{\Delta}(t)=Y_t\text{ }-\text{ } k^A_t\sigma^{\pi}_t\cdot{Y}^{\Delta}(t),\end{align}is viable. Observe that both above can be the sum of a martingale and a well-behaved drift, and that all applied economic pricing theories, continuous and regular near risk-neutrality, are already viable when written in this form at least for \emph{small enough} risk-discounts.
        
        \begin{rem}\label{zreg}
     In ambiguity-aversion settings, the primary utility function, enforcing risk-aversion, and the secondary-utility/variational-control, enforcing uncertainty-aversion, can be dialled independently and arbitrarily close to neutrality (linearity), to achieve the above separation. In classical and some parameter-learning settings, with a single utility, the above obtains by replacing payoffs with their model-sure expectations perturbed by suitably small noise.\end{rem}
   
\begin{comment}
    Let there be a cumulative vector asset-data process \(\{\mathbf{D}^t\}\), of state-space \(\mathcal{S}^T\), as characterised in Section \ref{gendef}, with natural filtration \(\{\mathbf{F}^{\mathbf{D}}_n\}\) satisfying usual conditions, such that the bullet-payoff \(Y\) is measurable only to \(\mathbf{F}_T\) and is the only horizon data-point: \(\mathbf{D}^T(T)\equiv Y\). Adapted to \(\{\mathbf{F}^{\mathbf{D}}_n\}\) is a price process \(\{S_n\}\), with \(S_T\equiv Y\) by definition.
%\item{\label{nofinresi}With respect to a binary risk \(B\) of interest, let the asset have state-space \(\mathcal{S}^T_\mathcal{B}\), believed to be under a law of the form (\ref{thetaQB}) that permits regular \(B\)-inference and is
    %objectively true \emph{up to equivalence}; write this law as \(\pi^B_0\mathbf{Q}_B^{(T)}\), with '\((T)\)' denoting 'restriction to' for \(T<\infty\), in which case, given \(\mathbf{Q}_b^{(T)}\sim \mathbf{Q}_{\overline{b}}^{(T)}\), the asset is non-resolving of \(B\)-risk\footnote{In continuous time it is possible to have finite resolution-time \(T(B)<\infty\); see Item-\ref{takinglimit} to -\ref{cmpltness}, Appendix \ref{infps}.}. %Let law \(\pi^B_0\mathbf{Q}_B^{(T)}\) be such that it allows }\item{\label{constantsignstate'}Let \(B\)-sure pricing \(S^B_n\), \(\forall n<T\), be known and given, and the \(\{\mathbf{F}^{\mathbf{D}}_n\}\)-adapted \emph{\(B\)-impact} \(\{S^{b-\overline{b}}_n\}:=\{S^b_n-S^{\overline{b}}_n\}\), economically consistent also: \(sign[S^{b-\overline{b}}_n]=sign[b\overline{b}]\), \(\forall n<T\).}
%\item{%\begin{rem}\label{horizonissue}Let our asset market be \emph{complete}, which, among other things, means \(\{\mathbf{F}^{\mathbf{D}}_n\}=\{\mathbf{F}^{{S}}_n\}=:\{\mathbf{F}_n\}\), where \(\{\mathbf{F}^{{S}}_n\}\) is the natural, usual, filtration generated by asset-pricing \(\{S_n\}\) itself. Denote such an asset-process henceforth by \(\big(Y,\{S_n\}; (\mathcal{S}^T_\mathcal{B}, \{\mathbf{F}_n\}, \pi^B_0\mathbf{Q}_B^{(T)})\big)\).
 %Discrete regular inference resolves \(B\) at \(T(B)=\infty\), ensuring market completeness with respect to \(B\)-risk.
%\end{rem}}
\end{comment}

        The real question is then the range of viability for (\ref{twintwin}-\ref{0}) and if they have tractable RNE formulations. For this, let us characterise the economic price-of-\(B\)-risk \(\{k^A_t\}\) better:

\begin{definition}
    \label{mrr}
   Asset-pricing under model-risk \(B\) is said to be \emph{usual} if its price-of-\(B\)-risk \(k^A_t\), (\ref{k}), is a \emph{time-homogenous twice-differentiable function} of its model-risk belief\footnote{Economic risk-pricing, as the solution to some optimisation scheme, is measurable to the natural filtration of its input \((\{Z_t\},\{\pi^B_t(\mathbf{D}_t)\})\); so it is at any time \(t\) a function of \((Z_t, \pi^B_t(\mathbf{D}_t))\) (Proposition 4.9 of \cite{4.9}). %(e.g. the toy-assets of Remark \ref{coinass}, where dependence on \(\{h_t\}\) is trivial if \(B\)-sure, and implicit, via , if \(B\)-unsure).
Any explicit dependence on \(Z_t\) or time, if at all, is presumed weak or non-existent, at least for \(t\ll T\).} % it amounts to making our model-risk relatively 'short-lived' v}
\(\pi^B_t\), \(\forall t\ll T\).
\end{definition}

\begin{rem}\label{D=Z}
    This makes \(\{k^A_t\}\) a time-homogeneous Ito process of risk-inference \(\{\pi^B_t\}\), a common feature of economic risk-pricing theories: priced risks are captured as state-variables so that the resulting risk-pricing is their function and hence path-independent with respect to them. It will be seen (via Proposition \ref{p3} and Corollary \ref{coro1}) that this is responsible for all the 'nice' properties of \(\{k^A_t\}\): fixed-signed, with \(\lim_{\sigma^{\pi}_t\to0}k^A_t=0\) (ensuring (\ref{fits})), and generating prices (\ref{twintwin}-\ref{0}) that follow familiar dynamics and allow a wide range of viable risk-premium choices.

    %Property-3 will be shown to hold for usual-case assets under model-risk.%exhibit two distinct types of drifts, one from excess ('unpriced') realisations (\(\{z_t\}\)), one from risk-inference, partly/wholly based on the excess: the Ito process dynamic 'conspires' so that under any equivalent measure the expected sum of the two cannot vanish unless they each vanish separately; such 'canonical' pricing allows essentially all economically sensible risk-premium choices in (\ref{twintwin}-\ref{0}).
\end{rem}

\section{Canonical RNE Asset-Pricing under Model-Risk}\label{coresec}
    
\subsection{Viable, Usual, Model-Risk Pricing}\label{prop1}

\begin{proposition}\label{p3}
Under model-risk \(B\) and model-risk inference \(\{\pi^B_t\}\) based on a reference belief \(\pi^B_0\mathbf{W}^{T}_B\in\mathcal{L}_{\mathcal{S}}\) that belongs to the set \(\mathcal{L}_{\mathcal{S}}\) of model-uncertainty (Remark \ref{setLB}), any usual pricing (Definition \ref{mrr}) of any usual-case asset is viable iff. it has the {canonical form} below: \(\forall t<T\),\begin{align}\label{twintwin'}
     &S_t=S^{(RP|{\Pi})}_t=\Sigma_B\Pi^B_tS^B_t=S^{(0|{\Pi})}_t-\Sigma_B{RP}^B(t)\Pi^B_t,
 \end{align}where model-risk only asset-pricing \(S^{(0|{\Pi})}_t=\Sigma_B\Pi^B_tY^B_t\) (see (\ref{0})) is given by a RNE model-risk inference process \(\{\Pi^B_t\}\) under a RNE measure of the form \(\Pi^B_0\mathbf{W}^{T}_B\in\mathcal{L}_{\mathcal{S}}\), \(\Pi^B_0\sim\pi^B_0\).
 \end{proposition}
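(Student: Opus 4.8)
The plan is to prove the equivalence in two directions, sufficiency being a direct construction and the substance lying in necessity. Throughout I will use the reduction, implicit in Definition~\ref{mrr} and (\ref{k}), that a \emph{usual} pricing of a usual-case asset means exactly $A^+_t=A^+(\pi^+_t)$ for one fixed, time-homogeneous, twice-differentiable map $A^+$ on $(0,1)$, with $A^+(\pi)\le\pi$ (from $k^A_t\ge0$) and $A^+(0)=0$, $A^+(1)=1$ (forced by $\lim_{\sigma^\pi_t\to0}k^A_t=0$, cf.\ Remark~\ref{D=Z}). In odds coordinates this says $O_f[A^+_t]$ is a fixed function of $O_f[\pi^+_t]$, and the whole proposition reduces to the claim: such a pricing is viable if and only if that function is \emph{multiplication by a constant}, $O_f[A^+_t]=\rho\,O_f[\pi^+_t]$ with $\rho\in(0,1]$. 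Indeed, such a $\rho$ is precisely what turns $\{A^+_t\}$ into a genuine posterior $\{\Pi^+_t\}$ --- prior $O_f[\Pi^+_0]=\rho\,O_f[\pi^+_0]$ (so $\Pi^B_0\sim\pi^B_0$), updated by the \emph{same} likelihood ratios (\ref{br1}) as $\{\pi^+_t\}$, a pure prior-shift inside $\mathcal{L}_{\mathcal{S}}$ (Remark~\ref{setLB}) --- and $\rho$ is then the ``dynamically conserved constant'' of the abstract.

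For sufficiency I would argue directly. Given the canonical form, put $\mu:=\Pi^B_0\mathbf{W}^T_B\in\mathcal{L}_{\mathcal{S}}$; then $\{\Pi^B_t\}$ is literally the $B$-posterior under $\mu$, so by the tower property over $B$, $S^{(0|\Pi)}_t=\Sigma_B\Pi^B_t Y^B_t=\mathbf{E}_\mu[Y\,|\,\mathbf{F}_t]$, a $\mu$-martingale terminating at $Y$, with $\mu\sim\pi^B_0\mathbf{W}^T_B$; hence $S^{(0|\Pi)}$ is viable with RNE measure $\mu$. For the full price $S_t=S^{(RP|\Pi)}_t=S^{(0|\Pi)}_t-\Sigma_B RP^B(t)\,\Pi^B_t$ one subtracts a process built from the deterministic $B$-sure risk-premia $RP^B(t)$ and the $\mu$-martingale $\Pi^+_t$, which contributes only a ``well-behaved'' finite-variation-type drift (cf.\ the observation following (\ref{twintwin})--(\ref{0})); since the diffusive part of $S_t$ is non-degenerate this drift is removed by an equivalent Girsanov change of measure --- equivalently, one blends the $B$-sure RNE measures $\check{\mathbf{W}}^T_B$ with prior $\Pi^B_0$ --- whenever the risk-discounts lie in the admissible (Novikov-compatible) range, which is exactly the ``range of viability'' the proposition scopes.

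For necessity, suppose a usual pricing $\{S_t\}$ is viable. Viability gives, via (\ref{natrnedecom2})--(\ref{natrnedecom1}), a RNE measure and the decomposition $S_t=\Sigma_B\hat{\pi}^B_t\hat{S}^B_t$ with $\{\hat{\pi}^B_t\}$ a genuine posterior; subtracting the well-behaved $\Sigma_B RP^B(t)A^B_t$-type drift as above reduces matters to the ``model-risk only'' price $S^{(0|A)}_t=\Sigma_B A^B_t Y^B_t$, itself then viable, with RNE objects $\hat{\mathbf{W}}^{(0)}$, $\hat{\pi}^{(0)B}_t$, $\hat{S}^{(0)B}_t$. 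The crucial step is to show $\hat{S}^{(0)B}_t=Y^B_t$, i.e.\ that the RNE measure of $S^{(0|A)}$ does \emph{not} distort the $B$-conditional laws: since $\{D_t\}$ is price-irrelevant given $B$ (Item~\ref{Bdata}) and usual pricing makes the whole scheme a time-homogeneous function of $\pi^B_t$ with no genuine $Z_t$ or explicit-time dependence (the footnote to Definition~\ref{mrr}, Remark~\ref{D=Z}), $\hat{S}^{(0)B}_t$ must be a fixed function of $(z_t,t)$ alone, hence $\sigma^\pi_t$-free, so the internal-consistency limit (\ref{fits}), whose $B$-sure price here is $Y^B_t$, collapses to $\hat{S}^{(0)B}_t=\lim_{\sigma^\pi_t\to0}\hat{S}^{(0)B}_t=Y^B_t$. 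With this, $S^{(0|A)}_t=z_t+\hat{\pi}^{(0)+}_t Y^\Delta(t)=z_t+A^+(\pi^+_t)Y^\Delta(t)$ (recall $Y^-_t=z_t$, as $r^Z_{-,\cdot}\equiv0$, Item~\ref{inference}) and $Y^\Delta(t)>0$ (Item~\ref{constantsignstate}) give $\hat{\pi}^{(0)+}_t=A^+(\pi^+_t)$; equivalently the RNE likelihood ratio $\hat{L}^{(0)+/-}_t$ driving $\{\hat{\pi}^{(0)+}_t\}$ is a time-homogeneous function of the reference ratio $L^{+/-}_t$. Since both are positive martingales updating the same hypothesis from the same (by the undistorted-conditionals step) data laws, and $\{L^{+/-}_t\}$ is a proper resolving (Item~\ref{inference}), hence unbounded-variation, Wiener-type process, that function must be affine, and the functional boundary value $A^+(0)=0$ kills its additive constant --- leaving $\hat{L}^{(0)+/-}_t=L^{+/-}_t$ and $O_f[A^+_t]=\rho\,O_f[\pi^+_t]$. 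Setting $\Pi^B_0:=A^B_0\sim\pi^B_0$ and $\rho:=O_f[\Pi^+_0]/O_f[\pi^+_0]$ gives $\Pi^B_0\mathbf{W}^T_B\in\mathcal{L}_{\mathcal{S}}$, $\Pi^B_t=A^B_t$, and substituting $S^B_t=Y^B_t-RP^B(t)$ into $S_t=\Sigma_B A^B_t S^B_t$ yields the canonical form (\ref{twintwin'}).

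The main obstacle is the step $\hat{S}^{(0)B}_t=Y^B_t$: that a viable RNE measure for the model-risk-only price leaves the $B$-conditional payoff laws undistorted. A priori such a measure has much freedom and could trade a distortion of the $B$-conditional law of $Z$ against a compensating distortion of the $B$-inference so that the blend $\Sigma_B\hat{\pi}^{(0)B}_t\hat{S}^{(0)B}_t$ still reproduces $S^{(0|A)}_t$; excluding this is where the full strength of Definition~\ref{mrr} --- time-homogeneity, path-independence in $\pi^B_t$, and absence of explicit $Z_t$/time dependence (Remark~\ref{D=Z}) --- must be spent, together with the internal-consistency clause (\ref{fits}) and the consistency requirements (\ref{consstncy}), (\ref{constantsignstate}); this is the natural role of the preceding Lemma~\ref{l1}. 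The residual points are routine given the diffusion setup of Section~\ref{assdef1}: the Novikov/admissibility bounds delimiting the viable range in sufficiency, non-degeneracy of the diffusion coefficient of $S_t$ near risk-neutrality, and the affine-implies-linear passage, which only needs $\{l^{+/-}_t\}$ to be a proper Wiener-type process whose range and martingale property leave no room for an additive constant once $A^+(0)=0$ is imposed.
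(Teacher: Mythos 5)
Your architecture matches the paper's in outline: sufficiency by exhibiting \(\Pi^B_0\mathbf{W}^T_B\) as a martingale measure for \(S^{(0|\Pi)}\) and absorbing the \(B\)-sure risk-premium drift via a Novikov-type condition, and necessity by showing the pricing coefficient \(A^+(\pi^+_t)\) must be a Bayes posterior driven by the same likelihood-ratio process from a shifted prior. The sufficiency half and the final bookkeeping are fine. But the step you yourself single out as ``the main obstacle'' --- that the RNE measure of the model-risk-only price cannot trade a distortion of the \(B\)-conditional laws against a compensating distortion of the inference weights, so that \(\hat{S}^{(0)B}_t=Y^B_t\) --- is precisely the substance of the proposition, and your proposal does not prove it. Your sketch (``\(\hat{S}^{(0)B}_t\) must be a fixed function of \((z_t,t)\) alone, hence \(\sigma^\pi_t\)-free'') assumes what needs to be shown: Definition \ref{mrr} constrains the observable price \(S_t\) as a function of \(\pi^+_t\), not the individual components \(\hat{\pi}^B_t,\hat{S}^B_t\) of an arbitrary RNE decomposition, and an equivalent measure is a priori free to break the product structure of \(\mathcal{B}\times\mathcal{S}^Z\times\mathcal{S}^D\) so that \(\hat{S}^{(0)B}_t\) acquires \(D\)-history dependence. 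The paper closes exactly this gap quantitatively: it writes the Ito expansion of \(S^{(0|A)}_t=Y^-_t+A^+(\pi^+_t)Y^\Delta(t)\), splits the drift into a payoff-side contribution and an inference-side contribution ((\ref{dYtermsdAfactor})--(\ref{inferencedrift})), and uses adjacency (Item-\ref{tails}, Lemma \ref{l1}, Remark \ref{orderofthings}) to show the former is \(\mathcal{O}((\sigma^\pi_t)^{2})\) while the latter is \(\mathcal{O}((\sigma^\pi_t)^{4})\), forcing each to vanish separately in RNE expectation; the vanishing of the inference-side drift is the ODE (\ref{keyA-eq}), whose solutions are exactly the posteriors \(\Pi^+_t\) with \(\Pi^B_0\sim\pi^B_0\). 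Nothing in your proposal substitutes for this order-separation argument.

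A secondary problem: your ``affine-implies-linear'' passage compares \(L^{+/-}_t\) and \(\hat{L}^{(0)+/-}_t\), which are martingales under \emph{different} measures (\(\mathbf{W}_-\) and \(\hat{\mathbf{W}}^{(0)}_-\) respectively), so the harmonic-function-of-a-martingale argument does not apply as stated unless the conditional measures have already been shown to coincide --- and \(\hat{S}^{(0)B}_t=Y^B_t\) for all \(t\) does not by itself imply \(\hat{\mathbf{W}}^{(0)}_B=\mathbf{W}_B\). The paper's Lemma \ref{l1} handles this by deriving the ODE \(g''=-(-1)^{\mathbf{1}_{+}(B)}(g'-1)g'\) for the map between the two log-LR processes under each \(B\)-conditional law and pinning \(g'(0)=1\) with the common divergence of resolving log-LR processes; if you retain your route, you need that lemma applied to your specific pair of processes, not a generic affinity claim.
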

 
\begin{proof}
     Appendix \ref{proofp3}. Briefly, viable pricing \(S^{(0|A)}_t=Y^{-}_t+A^+_tY^{\Delta}(t)\) must see \(dY^{-}_t+A^+_tdY^{\Delta}(t)\) offset \(Y^{\Delta}(t)dA^+_{t}\) in expectation under a RNE measure, possible \emph{iff.} the two expectations vanish \emph{separately} due to \(\{A^+_{t}\}\) being an Ito process of %reference model-risk inference
     \(\{\pi^+_t\}\)%driven partly/wholly by \(\{dY^{-}_{t}\}\)
     , thus limiting the RNE to the form stated% a member of set \(\mathcal{L}_{\mathcal{S}}\) 
     . The viability of (\ref{twintwin'}) then follows from the 'good behaviour' of \(\{\Sigma_B{RP}^B(t)\Pi^B_t\}\).%The viability of (\ref{twintwin'}), as a martingale plus a well-behaved drift, follows.\footnote{Note its obvious viability if \(\{{RP}^B(t)\}\) is \(B\)-indifferent and pseudo-constant, as in many practical tasks.}. 
\end{proof}\\

%\begin{rem}\label{d2}
Canonical model-risk only pricing \(\{S^{(0|{\Pi})}_t\}\), with \(\{RP^B_t\}=0\), has identical economic and RNE price-of-model-risk ((\ref{k}) and (\ref{ktilde})); indeed any pricing with this property %(Lemma \ref{l1} and Remark \ref{sameQ}, Appendix \ref{infps})
must be model-risk only (with \(\{RP^B_t\}=0\))\footnote{\label{f7}Any such pricing must have its RNE measure in \(\mathcal{L}_{\mathcal{S}}\) and so have the form \(\{\Sigma_BA^B_tS^B_t\}=\{\Sigma_B\Pi^B_tY^B_t\}\). Hence, \(\forall t<T\), \(\lim_{\sigma^{\pi}_t\to0}S_t=Y^B_t\) since \(\lim_{\sigma^{\pi}_t\to0}\sigma^{\Pi}_t=0\), but \(\lim_{\sigma^{\pi}_t\to0}{S}_t=S^B_t=Y^B_t-RP^B_t\) given (\ref{fits}); thus \(\{RP^B_t\}=0\).}. Given Definition \ref{mrr} and Remark \ref{D=Z}, this means that viable prices (\ref{abn}) that have non-trivial \(B\)-sure risk-pricing (with \(\{RP^B_t\}\not=0\)) must have a RNE price-of-model-risk that is \emph{path-dependent} with respect to any model-risk process \(\{\pi^B_t\}\) derived from the set \(\mathcal{L}_{\mathcal{S}}\) of model-uncertainty. This formalises the observation of \cite{GTim} that learning makes RNE probabilities path-dependent and so intractable. 

The source of this difficulty is basically (\ref{fits}) (see footnote-\ref{f7}), given Lemma \ref{l1} and the fact that the operation of \(B\)-conditioning and of risk-neutralising (reference beliefs such as \(\pi^B_0\mathbf{W}^{T}_B\)) do not commute in general.
 %\end{rem}
Canonical pricing (\ref{twintwin'}) does not remove this intrinsic issue but avoids it by \emph{separating} the RNE pricing of model and non-model risks, each by itself easy and tractable. %\footnote{.}
The total RNE measure of (\ref{twintwin'}), though guaranteed, remains intractable.

\subsection{Properties of Canonical Model-Risk Pricing}\label{dynmprop}

\subsubsection{The 1-Parameter Price-of-Model-Risk Formula}
   
\begin{cor} \label{coro1}Canonical %model-risk premium \(\{B\_RP_t\}\) and 
price-of-model-risk \(\{k^{\Pi}_t\}:=\{\frac{{\pi}^+_t-\Pi^+_t}
{\sigma^{\pi}_t}\}\) (see (\ref{k}) or (\ref{ktilde})) satisfies:\begin{align}%\label{canon1}B\_RP_t&=({\pi}^+_t-{\Pi}^+_t){S}^{\Delta(t)=k^{\Pi}_t\sigma^{\pi}_t\cdot{S}^{\Delta}(t),\text{ }t<T,\\
\label{canon2}
        k^{\Pi}_t=(K^{\frac{1}{2}}-K^{-\frac{1}{2}})\sigma^{\Pi}_t,\text{ }\forall t<\infty,
        \end{align}with \(\sigma^{\Pi}_t:=({\Pi}^+_t\underline{\Pi}^+_t)^{\frac{1}{2}}\) and \(K:=\frac{O_f[{\pi}^+_0]}{O_f[{\Pi}^{+}_0]}%^{sign[b\overline{b}]}
        =\frac{O_f[{\pi}^+_t]}{O_f[{\Pi}^{+}_t]} %^{sign[b\overline{b}]}
        \), a conserved quantity.  
        \end{cor}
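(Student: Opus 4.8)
The plan is to reduce everything to two ingredients: (i) the structural output of Proposition \ref{p3}, and (ii) Bayes' Rule in odds form, after which the statement becomes an elementary algebraic identity. From Proposition \ref{p3}, canonical model-risk pricing is driven, in its model-risk component \(\{S^{(0|\Pi)}_t\}\), by a RNE measure of the form \(\Pi^B_0\mathbf{W}^{T}_B\in\mathcal{L}_{\mathcal{S}}\) with \(\Pi^B_0\sim\pi^B_0\); by the definition of \(\mathcal{L}_{\mathcal{S}}\) (Remark \ref{setLB}) this measure differs from the reference belief \(\pi^B_0\mathbf{W}^{T}_B\) \emph{only} in its prior --- both share the same \(B\)-sure laws \(\mathbf{W}^{T}_B\). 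Hence the RNE log-likelihood-ratio \(\hat{l}^{+/-}_t=\log\frac{d\mathbf{W}^{T}_+|_t}{d\mathbf{W}^{T}_-|_t}\) is, pathwise, the \emph{same} process as the reference log-LR \(l^{+/-}_t\), a log-LR depending only on the competing models and not on the prior weighting them. This identification is the one place any care is needed; I elaborate on it at the end.

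I would then apply Bayes' Rule (\ref{br1}) to the reference and the RNE inference processes at once, obtaining \(O_f[\pi^+_t]=O_f[\pi^+_0]\exp[l^{+/-}_t]\) and \(O_f[\Pi^+_t]=O_f[\Pi^+_0]\exp[\hat{l}^{+/-}_t]=O_f[\Pi^+_0]\exp[l^{+/-}_t]\). Dividing these, the common data-dependent factor \(\exp[l^{+/-}_t]\) cancels, leaving
\begin{equation*}
\frac{O_f[\pi^+_t]}{O_f[\Pi^+_t]}=\frac{O_f[\pi^+_0]}{O_f[\Pi^+_0]}=:K,\qquad\forall t,
\end{equation*}
which is precisely the claimed conservation: \(K\) is constant in \(t\) and in fact a deterministic constant, independent of the realised data path.

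The formula (\ref{canon2}) then follows by algebra alone. Writing \(K=\frac{\pi^+_t\underline{\Pi}^+_t}{\Pi^+_t\underline{\pi}^+_t}\) (with \(\underline{(\cdot)}=1-(\cdot)\)), one finds \(K-1=\frac{\pi^+_t-\Pi^+_t}{\Pi^+_t\underline{\pi}^+_t}\), hence \(K^{\frac12}-K^{-\frac12}=(K-1)K^{-\frac12}=\frac{\pi^+_t-\Pi^+_t}{\sqrt{\pi^+_t\underline{\pi}^+_t\,\Pi^+_t\underline{\Pi}^+_t}}\). Multiplying through by \(\sigma^{\Pi}_t=(\Pi^+_t\underline{\Pi}^+_t)^{\frac12}\) and recognising \(\sigma^{\pi}_t=(\pi^+_t\underline{\pi}^+_t)^{\frac12}\) gives \((K^{\frac12}-K^{-\frac12})\sigma^{\Pi}_t=\frac{\pi^+_t-\Pi^+_t}{\sigma^{\pi}_t}=k^{\Pi}_t\), which is (\ref{canon2}). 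The identity holds at every \(t\) for which both beliefs lie in \((0,1)\), i.e. \(\forall t<\infty\) under the standing assumptions, and the displayed sign \(k^{\Pi}_t\ge0\) is equivalent to \(K\ge1\), i.e. to \(\Pi^+_0\le\pi^+_0\).

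The only genuine obstacle is the step asserting \(\hat{l}^{+/-}_t=l^{+/-}_t\): one must be sure that membership in \(\mathcal{L}_{\mathcal{S}}\) pins the RNE's \(B\)-sure components down to be \(\mathbf{W}^{T}_B\) \emph{exactly} --- not merely equivalent to them --- so that the \(\exp[l^{+/-}_t]\) factors cancel exactly rather than up to a possibly path-dependent Radon--Nikodym density. This is guaranteed by Proposition \ref{p3}, which delivers the RNE measure in the explicit form \(\Pi^B_0\mathbf{W}^{T}_B\); once that is in hand the corollary is immediate.
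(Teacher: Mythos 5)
Your proposal is correct and follows essentially the same route as the paper, which disposes of this corollary in Remark \ref{1-2} with the single line that it ``follows from Bayes' Rule (\ref{br1})'': you have simply written out the two steps the paper leaves implicit, namely that the reference and RNE odds processes share the same log-LR factor \(\exp[l^{+/-}_t]\) (because Proposition \ref{p3} forces the RNE measure to have the exact form \(\Pi^B_0\mathbf{W}^{T}_B\), differing from \(\pi^B_0\mathbf{W}^{T}_B\) only in the prior), so their ratio \(K\) is conserved, and then the elementary identity \(K^{\frac12}-K^{-\frac12}=(\pi^+_t-\Pi^+_t)/(\sigma^{\pi}_t\sigma^{\Pi}_t)\). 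Your explicit flagging of the one delicate point --- that the \(B\)-sure components must be \(\mathbf{W}^{T}_B\) exactly rather than merely equivalent --- is a useful addition but does not change the argument.
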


\begin{rem}\label{1-2}
     This follows from Bayes' Rule (\ref{br1}). Price-of-model-risk \(k^{\Pi}_t\) is positive iff. \(K>1\). It takes value \(k_{\frac{1}{2}}:=(K-1)/(K+1)\) at peak risk \(\sigma^{\pi}_t=\frac{1}{2}\), where model-risk pricing offers a risk-premium of \(\frac{1}{2}k_{\frac{1}{2}}{S}^{\Delta}(t)\) (see (\ref{brpn})) and a gain-to-loss ratio\footnote{The gain-to-loss with respect to \(B\)-outcomes is  \(O_f[{\Pi}^+_t]^{-1%sign[b\overline{b}]
     }\), but \(O_f[{\pi}^b_t]=1\) at peak \(B\)-risk, thus the claim.} of \((1+k_{\frac{1}{2}})/(1-k_{\frac{1}{2}})\equiv K\). To any buyers (sellers), the higher (lower) \(K\) is, the better; it is natural in theory and practice to set \(K\in(1,2)\) for 'competitively priced' model-risk, with \(K-1\not\gg1\) in any case.\end{rem}
\begin{rem}\label{coro2}The difference process \(\{{\pi}^+_t\}-\{{\Pi}^+_t\}=\{k^{\Pi}_t\sigma^{\pi}_t\}\) has a fixed sign. Unless the model-risk is consistent (see Item-\ref{constantsignstate}, Section \ref{assdef1}, and (\ref{consstncy})), viable model-risk premia \(\{B\_RP_t\}\) (see (\ref{brpn})) must be negative sometimes and so correspond to no known economic risk-pricing theories.
\end{rem}
\subsubsection{Connection with Familiar Price-Dynamics}\label{connect}
          
%Recall expected payoff \(\{Y^{-}_t\}\) ((\ref{yb})) and price process \(\{S^{-}_t\}\) ((\ref{walk})). may be written in terms of firmed-up values \(\{z^-_j\}:=\{z_j(B=-)\}\), whose martingale measure \({W}^{(T)}_{-}\) is Wiener and generated by i.i.d standard Normals \(\{\epsilon^Z_n\}\) and some spot-volatility process \(\{\sigma^Z_{j}\}\): so \(\Delta z^{-}_{j+1}=\sigma^Z_j\epsilon^Z_{j+1}\), \(0\le j<T\), and,\begin{align}\label{the-schir}&\Delta Y^{-}_{j+1}(B)=\mathbf{1}_{\{B=+\}}\cdot r(j)+\Delta z^{-}_{j+1},\\\label{b-sured}&\Delta S^{-}_{j+1}(B)=\mathbf{1}_{\{B=+\}}\cdot{r}(j)+{R}^{-}(j)+\Delta z^{-}_{j+1},%=\mathbf{1}_{\{B=+\}}\cdot\check{r}(n)+\Sigma_{B'}\mathbf{1}_{\{B'\}}\cdot{R}^{B'}(n)+\sigma^Z_n\epsilon^Z_{n+1},\end{align}, which reads as (\ref{the-schir}) in discrete time.%; note that (\ref{b-sured}) is the exact same test if written with respect to \(\{S^{-}_n\}\).%\(\check{r}(n):=-\Delta_{(1,n)}S^{\Delta}_n>0\)\(S^{\Delta}(n):=\check{y}_{+}(T-n,n)-\check{y}_{-}(T-n,n)\);note \(r(n)-\check{r}(n)\equiv R^+(n)-R^{-}(n)\)
%especially when \(B\)-sure risk-discounting is \(B\)-indifferent (i.e. \(\{RP^+_n\}=\{RP^{-}_n\}\), so \(\{Y^\Delta_n\}=\{S^\Delta_n\}\), \(r(n)\equiv\check{r}(n)\) and \(R^+(n)\equiv R^{-}(n)\)).If the \(B\)-sure drift \(-d{RP}^B(t)\equiv \check{r}^B_tdt\ge0\), from \(B\)-sure pricing...note \emph{model-drift} \(-dY^{\Delta}(t)\equiv r^{Z\Delta}_tdt\) stemming from the impact of model-risk \(B\).
           For better focus, let any \(B\)-sure risk-premia be indifferent to \(B\)-value henceforth, to be written as \(\check{rp}(t)={\mathbf{E}}[\int_t^{T}\check{R}_sds]>0\), with \(\check{R}_t\) denoting the \emph{\(B\)-sure risk-premium drift} at any \(t<T\); note then \(Y^{\Delta}(t)=S^{\Delta}(t)\) as a result, %both given by \(\int_t^{T}r^{Z\Delta}_{s}ds\). %;note \(Y^{\Delta}(\cdot)=S^{\Delta}(\cdot)\) in this case.
           Canonical pricing (\ref{twintwin'}) becomes: \begin{align}\label{canon0'}
&S_t=%S^{(\check{rp}|\Pi)}_t=
S^{(0|\Pi)}_t-\check{rp}(t)=Y^{-}_t+S^{\Delta}(t)\Pi^{+}_{t}-\check{rp}(t),\\\label{delS+}
&dS_{t}(B)=dY^{-}_t(B)+S^{\Delta}(t)d\Pi^{+}_t(B)-r^{Z\Delta}_t\Pi^{+}_{t}(B)dt+\check{R}_tdt.
\end{align} 

By setup (Section \ref{assdef1}) and inference dynamics (\ref{w'}-\ref{rawdrift}), process \(\{d\Pi^{+}_{t}\}\) has two sources of noise, one from \(\{Z_t\}\) ((\ref{z})), given by \(\{\sigma^{lZ}_t dw^Z_{t}\}\), with \(\sigma^{lZ}_t:=r^{Z\Delta}_t/\sigma^Z_t\ll1\) ('signal-to-noise'), one from \emph{independent} \(B\)-informative data \(\{D_t\}\) ((\ref{bD})), modelled likewise, which can be absorbed through \((\sigma^l_t)^2:=(\sigma^{lZ}_t)^2+(\sigma^{lD}_t)^2\) and \(\sigma^l_tdw^{}_{t}:=\sigma^{lZ}_{t}dw^Z_{t}+\sigma^{lD}_{t}dw^D_{t}\). Therefore:\begin{align}\label{dyn+}&d\Pi^+_{t}(B)= \mu^{\Pi,B}_tdt + \sigma^{\Pi,l}_t dw^{}_{t},\text{ with},\\
    \label{drft}
&\mu^{\Pi,B}_t:=\sigma^{\Pi,l}_t\sigma^l_t\cdot\big{(}{{\mathbf{1}_{+}}}(B)-{\Pi}^+_t(B)\big{)},\\\label{vol}
&\sigma^{\Pi,l}_t:=(\sigma^{\Pi}_t)^2\sigma^l_t.
\end{align}The above has a drift by design relative to reference belief \(\{{\pi}^+_t\}\) (with dynamics (\ref{rawdrift})):
\begin{equation}\label{avedrft}
\mu^{\Pi,\pi}_t:=\Sigma_B{\pi}^B_t{\mu^{\Pi,B}_t}
=\sigma^{\Pi,l}_t\sigma^l_t\cdot(\pi^+_t-\Pi^+_t)=(\sigma^{\Pi}_t\sigma^l_t)^2\cdot k^{\Pi}_t\sigma^\pi_t
\ge0;\end{equation}note \({\mu^{\pi,\pi}_t}\equiv0\equiv{\mu^{\Pi,\Pi}_t}\), reflecting the martingale property of inference and viable pricing. Further, as inference volatility \(\sigma^{\Pi,l}_t\) (\ref{vol}) drives price-volatility (see (\ref{delS+}-\ref{dyn+})), price-of-model-risk \(\{k^{\Pi}_t\}\) (by (\ref{canon2})) peaks where price-volatility peaks, \emph{ceteris paribus}.

Pulling it together, the full asset-price dynamic, a \((\Pi^B_0\mathbf{W}^{T}_{B})\)-martingale plus drift \(\{\check{R}_t\}\), reads:\begin{align}\label{ori}
    dS_{t}(B)=&\big[\check{R}_tdt+\sigma^Z_tdw^Z_{t}\big]\\\label{bsureterms}
    +&\big[\big(\mathbf{1}_{+}(B)-\Pi^{+}_{t}(B)\big)\cdot r^{Z\Delta}_tdt\big]\\\label{domi}
+&S^{\Delta}(t)(\sigma^{\Pi}_t)^2\big[\big(\mathbf{1}_{+}(B)-\Pi^{+}_{t}(B)\big)\cdot(\sigma^l_t)^2dt+\sigma^{l}_tdw^{}_{t}\big].\end{align}
%note \(\delta\check{r}(j)\ll r(j)\)by definition.
\begin{rem}\label{rankorder}
    Inference dynamics (\ref{domi}) usually dominate \(B\)-sure dynamics (\ref{ori}) and model-drift (\ref{bsureterms}), as \(\check{R}_t\ll S^{\Delta}(t)\) and \(r^{Z\Delta}_t\ll S^{\Delta}(t)\). Indeed, during gaps between payoff updates, which exist in reality, our asset-price equation of motion reduces to (\ref{domi}), essentially the sequential test process of \cite{PShir}%(for constant \(\{r^{D\Delta}_t\}\) and \(\{\sigma^D_t\}\))
    . Further, it often holds that \((\sigma^l_t)^2%(\sigma^{lZ}_t)^2(1+\text{F}^{D/Z}_t)
    \approx(\sigma^{lD}_t)^2\) (relative variance \(\text{F}^{Z/D}_t:=(\sigma^{lZ}_t)^2/(\sigma^{lD}_t)^2\ll1\)), so that model-risk inference, not tangible value, is the main driver of price moves and dispersions among the assets affected. Finally, under model-risk \(B\), even if unpriced (so the inference-induced drift (\ref{avedrft}) is nil), asset-price volatility rises regardless, unless asset-pricing is already \(B\)-sure (i.e. \((\sigma^{\Pi}_t)^2=0\)). 
\end{rem}

Averaged under model-risk belief \(\{{\pi}^+_t\}\), dominant price-dynamic (\ref{domi}) has this familiar form:\begin{equation}\label{klasic}
\sigma_t\bigg(\frac{\mu_t}{\sigma_t}dt+dw^{}_{t}\bigg),\end{equation}with volatility \(\sigma_t:=\sigma^{\Pi,l}_tS^{\Delta}(t)\), drift \(\mu_t:=\mu^{\Pi,\pi}_tS^{\Delta}(t)\), and price-of-diffusion-risk \(\mu_t/\sigma_t\) ((\ref{canon2})):\begin{equation}\label{klasic'}
    \frac{\mu_t}{\sigma_t}=(K^{\frac{1}{2}}-K^{-\frac{1}{2}})\sigma^{\Pi}_t\sigma^{\pi}_t\sigma^l_t=\frac{K^{\frac{1}{2}}-K^{-\frac{1}{2}}}{S^{\Delta}(t)}\bigg(\frac{\sigma^{\pi}_t}{\sigma^{\Pi}_t}\bigg)\sigma_t.
\end{equation}Expansion in \((K-1)\), given Remark \ref{1-2}, recovers the standard and ubiquitous relationship:
\begin{align}\label{capmlike}
    &\frac{\mu_t}{\sigma_t}\approx\bigg(\frac{K-1}{S^{\Delta}(t)}\bigg)\sigma_t\text{ and }\mu_t\approx\bigg(\frac{K-1}{S^{\Delta}(t)}\bigg)\sigma_t^2.
\end{align}
     The proportionality factor characterises risk-pricing via an intuitive metric, the intended gain-to-loss (Remark \ref{1-2}). How its level comes about is up to economic theories (e.g. risk-aversion). Our derivation, linking it to \emph{ex-ante} beliefs, makes it relevant to bias studies.

\section{Bias and Risk-Pricing Measurement via Price Anomalies}\label{app}

\subsection{The Joint-Hypothesis Issue}\label{jhi}

 Take as in Section \ref{connect} usual-case assets with model-risk \(B\) and \(B\)-indifferent \(B\)-sure risk-premia \(\{\check{rp}(t)\}\), for better focus. Let \(\{\mathbf{p}^B_t\}\) be the \(\{\mathbf{F}_t\}\)-adapted true conditional probabilities of \(B\)-values, given the true, known, \(B\)-sure laws \(\mathbf{W}^T_B\), and the true, potentially unknown, unconditional law \(\mathbf{p}^B_0\) of \(B\). Its \emph{ex-post} average price-drift \(\{rp_t\}:=\{\Sigma_B \mathbf{p}^B_t{Y^B_t}\}-\{S_t\}\) satisfies:\begin{equation}\label{ex-post-dft}
    rp_t-\check{rp}(t)=B\_RP_t+
(\mathbf{p}^+_t-{\pi}^+_t){S}^{\Delta}(t),\text{ }t<T.\end{equation}The LHS terms are each observable or known, but the RHS are joint: no study of risk-pricing \(\{B\_RP_t\}\) can be done without one on \emph{bias} \(\{\mathbf{p}^b_t\}-\{{\pi}^b_t\}\).% We show how this issue resolves in our approach.

The above, in standard model-risk study settings, with the 'asset' set to 'equity index' and 'bias' to nil, expresses the usual attribution of 'excess historical equity-return' (LHS) to 'model-risk premium' (RHS). Our focus is bottom-up and when bias is not necessarily nil.

%For our model-risk, any realised process under a realised \(B\)-outcome corresponds to one run of (\ref{ori}-\ref{domi})% as if \(\mathbf{p}^b_t\equiv1\text{ or } 0\)
 %Many 'like' observations are needed for an estimate of (\ref{ex-post-dft}). Yet, given 'one-off' and heterogeneous sources of model-uncertainties, how can this be carried out? 

\subsection{Bias and Risk-Pricing Parameters under Model-Uncertainty}\label{app0}
 An usual-case asset, if model-sure, is classical (a drift with a static noise). Upon the triggering of model-risk, it may enter a new state (e.g. from \(B=+\) to \(B=-\)) with true probability \(\mathbf{p}^{(\cdot)}_0\): denote the new state by '1', status quo, '{\bf{0}}', and status quo model-sure pricing, \(\{S^{\text{{\bf{0}}}}_t\}\).

We ignore the law governing triggering events by assuming that the average time between them, say \(\lambda^{-1}\gg1\), is so long that each is considered one-off. The period of interest is just after an arrival, long before the next, and long before horizon, so \(t\ll\lambda^{-1}\) and \(t\ll T\).

Canonical pricing (\ref{canon0'}), with the nature of possible change, \(sign[1{\text{{\bf{0}}}}]:=sign[Y^1_t-Y^{\bf{0}}_t]\), explicit, and the \(t\)-dependence of impact \(S^{\Delta}(t)\) dropped as \(t\ll T\) and so \(S^{\Delta}(t)\gg r^{Z\Delta}_t\), reads:
\begin{equation}\label{based} 
S_t=S^{\text{{\bf{0}}}}_t+sign[1{\text{{\bf{0}}}}]S^{\Delta}\cdot\Pi^1_t,%\text{ }t\ll T,
\end{equation}
where the RNE belief process \(\{\Pi^1_t\}\sim\{\pi^1_t\}\) is characterised by parameter \(K:=(\frac{O_f[\pi^1_0]}{O_f[{\Pi}^{1}_{0}]})^{sign[1{\text{{\bf{0}}}}]}\) (Corollary \ref{coro1}); let the risk-pricing be competitive, so \(K\in(1,2)\) (Remark \ref{1-2}).

The relationship between the true conditional probabilities \(\{\mathbf{p}^1_t\}\) (given the true unconditional law \(\mathbf{p}^1_0\)), reference beliefs \(\{\pi^1_t\}\) and RNE beliefs \(\{\Pi^1_t\}\), all \(\{\mathbf{F}_t\}\)-adapted, can be characterised by:\begin{align}\label{rho}&\rho :=\frac{O_f[\mathbf{p}^1_0]}{O_f[{\pi}^{1}_0]}
    ,\text{ and}\\
    \label{rhok}&\rho K^{sign[1{\text{{\bf{0}}}}]}\equiv \frac{O_f[\mathbf{p}^1_0]}{O_f[{\Pi}^{1}_{0}]}.
\end{align}If the change-risk is diversifiable, so unpriced, then \(K=1\); if it is classical, so its law \(\mathbf{p}^1_0\) is known, then \(\rho=1\). Parameter \(\rho\) is born of model-uncertainty and thus unknown \emph{ex-ante}.

The joint-hypothesis problem takes a concrete form: observed prices reflect product (\ref{rhok}), so external criteria seem necessary to separate 'intention' \(K\) from 'mistake' \(\rho\). We show that this need not be so. %indeed, a high apparent bias under model-uncertainty may not be irrational either.
The case of interest for us is when change is unlikely, \(\mathbf{p}^1_0<\frac{1}{2}\), but is priced as \emph{rare} so that (\ref{rhok}-\ref{rho}) are both sizeable (\(\rho\gg1\)). Such a strong status quo forms a clear base against which 'change' is defined and assessed. Unjustified strong status quo (i.e. \(\mathbf{p}^1_0\ge\frac{1}{2}\) and \(\rho\gg1\)) would allow easy wins for 'bet on change' and so should not persist.

\subsection{An Ideal Market Driven by Model-Uncertainty%Heterogenous Symmetric Dominant and Uniform Model-Risks%: Symmetry, Dominance and Uniformity
}\label{app1}
Consider an ideal %\emph{model-uncertainty driven market}
market, of a large number \(N\gg1\) of usual-case assets, each facing a change-risk \(B\), whose nature may vary from asset to asset (e.g. new CEO, tort) but whose \(B\)-sure laws are known. Let them be priced-risks (i.e. correlated with the stochastic-discount-factor) of the market, as they can elevate total volatility and cross-sectional dispersion\footnote{Dispersion among members of an equity index, as a priced factor, has been linked to market uncertainty, the business cycle and structural shifts (e.g. \cite{disp1}, \cite{disp2}).}, irrespective of their impact on total value (see (\ref{domi}) and Remark \ref{rankorder}). %if not market value (possible in just-so cases),
Let them be economically priced, so that Section \ref{coresec} applies (to individual asset processes). %Proposition \ref{p3} applies. with respect to some reference belief process about the \(B\)-risk.

Note that the market as a whole has a \emph{non-binary} model-risk: even if there is just one binary trigger (e.g. war, recession), there can be as many as \(2^N\) market models. Let the realisation and detection of each individual change-outcome be stochastic and independent, even for a macroeconomic trigger. 

For brevity, let the unconditional \emph{all-cause} probability of positive/negative economic change be \emph{symmetrical}: at any time half of all \emph{potential changes}, of any origin, are positive/negative %; any particular cause (e.g. war) may still be economically one-sided
(unconditional asymmetry is addressed in Remark \ref{negmo}\&\ref{uneven'}). %\footnote{Unconditional asymmetry makes no qualitative difference to the conclusions (Appendix \ref{momixfun} and \ref{locale}).}.
Let 'pricing intention' \(K\) ((\ref{rhok})) and 'apparent bias' \(\rho\) ((\ref{rho})) be indifferent to the economic direction and origin of change-risk, and the size of \(K\), \(\rho\) and risk-impact \(S^{\Delta}\), uniform across the assets. Such uniformity in practice amounts to asset dispersion being driven primarily by differences in the evolutions of model-risk inference rather than in any other aspect (Remark \ref{rankorder}). Similarly, let all risk-inference be based on log-LR processes that are homogenous in time and uniform across.

%Indeed, taking the number of members to infinity the market process can become classical, a simple drift with a static noise, albeit that the underlying meaning is non-classical.
\subsection{Conditional Cross-Sectional Trading}
 Pair-trading sorts assets into cohorts by cross-sectional statistics for long \emph{vs} short positions. It means in our setup conditioning by \(B\)-inferential events \(\{\Pi^1_t=v\}\), \(v\in(0,1)\), the driver of such statistics. Under (\ref{based}-\ref{rhok}), given the \(sign[1{\text{{\bf{0}}}}]=\pm\) of potential change, the \(\{\Pi^1_t=v\}\)-conditioned \emph{ex-post} mean-excess \(rp(\pm,v):=\pm(\mathbf{p}^1_t(v)-v)S^\Delta\) unpacks (by Bayes' Rule (\ref{br1})) to:\begin{equation}\label{mox}rp(\pm,v)=\pm(\sigma^v)^2
    \frac{1-(\rho K^{\pm1})^{-1}}{v+\underline{v}(\rho K^{\pm1})^{-1}}S^\Delta,
\end{equation}where \((\sigma^v)^2:=v\underline{v}\), essentially making price-volatility (see (\ref{delS+}-\ref{vol})) the driver of excess. Note that the above is invariant under label-switching \(1\leftrightarrow{\text{{\bf{0}}}}\) (i.e. \(v\leftrightarrow\underline{v}\), \(+\leftrightarrow-\) and \( \rho\leftrightarrow\rho^{-1}\)).
\begin{rem}
    With no uncertainty, \(\rho=1\), the above is by (\ref{canon2}) the \emph{ex-ante} risk-premium given \(\{\Pi^1_t=v\}\): in classical markets all such strategising merely earns what the market deems fair.
\end{rem}

Conditioning, or in practice, ranking, by events \(\{\Pi^1_t=v\}\), \(v\in(0,1)\), amounts to, under strong status quo, \emph{ranking by {momentum} (past performance)}: \(\{\Pi^1_t=v\}\) means \(\{\Pi^1_t\}\) going from \(\Pi^1_0\ll\frac{1}{2}\) to \(v\) and so an excess-to-date of \(\pm(v-\Pi^1_0)S^\Delta\) ((\ref{based})). %Sorting assets by past performance exploits this. For a given \(v\), at any given time \(n\), are there always sufficient number of qualifying events \(\{\Pi^1_n=v\}\) in the market (for trading)?
The associated model-uncertainty (\(\propto(\sigma^v)^2\)) is \emph{a priori} low ('strong status quo') and evolves with dataflow. For efficient pair-trading, the window of opportunity is given by (\ref{window}), Appendix \ref{win}.
    
Historical drifts are hard to measure in practice (e.g. \cite{mert}), so volatility trading can be an alternative. Volatility events \(\{\sigma^{\Pi}_t=\cdot\}\) have the form \(\{\Pi^1_t=v\}\cup\{\Pi^1_t=\underline{v}\}\), \(v\in(0,\frac{1}{2}]\), by (\ref{delS+}-\ref{vol}). Such \(\{\sigma^{\Pi}_t=v\underline{v}\}\)-conditioned \emph{ex-post} mean-excess \(rp_t^\sigma(v)\) is an average over the \(sign[1{\text{{\bf{0}}}}]\) of potential change and over the \(\{\Pi^1_t=v\}\)- and \(\{\Pi^1_t=\underline{v}\}\)-events in the volatility cohort:\begin{equation}\label{volx}
    	rp_t^\sigma(v):=  \frac{\sum_{\pm}\mathbb{P}_t(\pm, v)rp(\pm,v)+\sum_{\pm}\mathbb{P}_t(\pm,\underline{v})rp(\pm,\underline{v})}{\mathbb{P}_t(v)+\mathbb{P}_t(\underline{v})},\end{equation}where \(\mathbb{P}_t(\cdot)=\sum_{\pm}\mathbb{P}_t(\pm,\cdot):=\sum_{sign[1{\text{{\bf{0}}}}]}\mathbb{P}_t\big(\{sign[1{\text{{\bf{0}}}}]=\cdot\}\cap\{\Pi^1_t=\cdot\}\big)\).        
        %\(\mathbb{P}_n(\cdot):=\sum_{\pm}\mathbb{P}_n(\pm, \cdot)\), with \(\mathbb{P}_n(\pm, \cdot)\) being the joint-probability of \(\{sign[1{\text{{\bf{0}}}}]=\pm\}\) and \(\{\Pi^1_n=\cdot\}\).\(
     
     The relevant likelihoods (ratios) are determined by the underlying log-LR dynamics, whose cross-sections are Normal, making closed expressions possible. As a by-product, these probability ratios predict an association of volatility with 'negative momentum' (Remark \ref{negmo}), a well-known phenomenon (e.g. \cite{vol-mo1} and \cite{vol-mo}).
          
     Observe that both the risk-reward of momentum (\ref{mox}) and volatility trading (\ref{volx}) are driven by \((\sigma^v)^2\) essentially, and are thus concave as a function of \(v\) and vanishing as \((\sigma^v)^2\to0\).

\subsubsection{Momentum Effect and Status Quo Bias}\label{moeff} Conditional excess (\ref{mox}) exhibits Momentum, the well-known anomaly (\cite{jt}): for \(\rho\gg1\), any excess-to-date \(\pm(v-\Pi^1_0)S^{\Delta}\), due to RNE beliefs \(\{\Pi^1_t\}\) rising from \(\Pi^1_0\ll\frac{1}{2}\) to \(v>\Pi^1_0\), persists on average in the same direction. The effect grows with \(\rho\), which by its definition (\ref{rho}) is an indicator of bias \emph{against} change. 

Momentum applies also to \(\rho\ll1\): past excess due to RNE beliefs \(\{\Pi^1_t\}\) falling from \(\Pi^1_0\approx1\) to \(v<\Pi^1_0\) means an average future excess of the same sign; indeed, label-switching defines a new parameter \(\rho^{-1}>1\).  Momentum vanishes only if \(\rho=1\) (no bias or model-uncertainty).

Momentum trading (\ref{mox}) earns its best realisable excess \(rp^{\pm mo}_{max}\) at \(v^{\pm mo}_{max}\):\begin{equation}\label{vmo}
    v^{\pm mo}_{max}:=\frac{1}{(\rho K^{\pm1})^{\frac{1}{2}}+1},\text{ }rp^{\pm mo}_{max}:=\pm \frac{(\rho K^{\pm1})^\frac{1}{2}-1}{(\rho K^{\pm1})^\frac{1}{2}+1}S^\Delta.
\end{equation}Peak-profitability reads \(\frac{1}{2}(rp^{+mo}_{max}-rp^{-m o}_{max})\), by pairing cohorts of conditioning-momenta \(v^{\pm mo}_{max}\). As a function of bias \(\rho\), risk-reward (\ref{mox}) and its peak (\ref{vmo}) imply rising rewards from rising momentum up to a point, %of \(v^{+mo}_{max}\approx(\rho K^{\pm})^{-\frac{1}{2}}\)
beyond which profitability falls, as observed in \cite{vol-mo1}.

\subsubsection{Low-Risk Effect and Status Quo Bias}\label{loreff}
Low-Risk Effect (see e.g. \cite{vol-mo1}) turns the classical dictum 'high risk high reward' on its head. Momentum risk-reward (\ref{mox}) already hints at it: peak-reward (\ref{vmo}) is in the low-volatility zone for large bias \(\rho\gg1\). The effect is stronger under volatility-conditioning (\ref{volx}), whose peak-location and -size can be found by usual methods (Appendix \ref{moloder}). The solution at \(\rho=1\) (no bias) and the leading-order solutions at \(\rho\gg1\) (high bias) are revealing and happen to have the same expression; the peak-location \(v^\sigma_{max}\) and -size \(rp^\sigma_{max}\) are given by:\begin{equation}\label{lowrisk}
    v^\sigma_{max}:=\frac{1}{\rho+1},\text{ }rp^\sigma_{max}:=\frac{1}{2}\frac{K-1}{K+1}S^\Delta.
\end{equation}At \(\rho=1\) it is exact and \(v^\sigma_{max}=\frac{1}{2}\) (peak volatility), thus confirming 'high risk high reward' in the absence of model-uncertainty or bias. Otherwise a low-risk effect applies: peak-reward occurs at \(v^\sigma_{max}\approx\rho^{-1}\), % corresponding to low risk if \(\rho\gg1\); 
and peak volatility (\(v=\frac{1}{2}\)) is poorly rewarded \emph{ex-post}: \(rp^\sigma(\frac{1}{2})=\frac{S^\Delta}{2}\mathcal{O}(\frac{K-1}{\rho}-1)\).

Note the separation: bias \(\rho\) is revealed by peak-location \(v^\sigma_{max}\), and risk-pricing \(K\), by peak-reward \(rp^\sigma_{max}\), which is exactly that set \emph{ex-ante} for peak-risk (Remark \ref{1-2}). No external speculation is needed to tell 'intended pricing' from 'unintended bias'.

\section{Summary and Discussion}\label{con-dis}
We propose a risk-neutral equivalent formulation of model-risk pricing. Explicit price dynamics under binary model-risks with potentially unknown laws are derived. One implication is that model-risk can be a main source of excess value, with a key role in Status Quo Bias and anomalies such as Momentum and Low-Risk. Our approach offers an integrated and practicable framework to study these highly relevant and interconnected subjects.

We arrive at the above by making use of the informational constraint imposed by 'usual' economic asset-pricing (Remark \ref{pxedrisk0} and Definition \ref{mrr}), to show that viable asset-pricing under model-risk may take just one canonical form (Lemma \ref{l1} and Proposition \ref{p3}), which leads to a unique and intuitive model-risk pricing formula with convenient properties (Corollary \ref{coro1}). The sole risk-pricing parameter derives from a conserved quantity of inference dynamics and is a product of 'intention' (preference) and 'mistake' (difference of objective and subjective laws).

\emph{Ex-ante} ('intended') pricing, classical or with ambiguity-aversion/robust-control, can be captured by our parameter \(K\in[1,\infty)\), (\ref{rhok}), with \(K=1\) corresponding to 'no model-risk discounts'. Its derivation reveals that it sets the intended gain-to-loss ratio of the asset at peak model-risk, suggesting \(K\in[1,2)\) in competitive markets (Remark \ref{1-2}).

The difference between true model-risk law and the reference model-risk belief on which asset-pricing is based can be captured by our parameter \(\rho\in(0,\infty)\), (\ref{rho}), with \(\rho=1\) corresponding to 'no difference' (the classical setting), and \(\rho\not=1\), 'bias'. It gives expression to Status Quo Bias and allows its effects to be disentangled from risk-pricing (Section \ref{moeff}-\ref{loreff}).

The above is interesting also in light of \emph{the inertia axiom} of \cite{Bew}, which says, in effect, that to update an existing decision by incorporating a new state subject to Knightian uncertainty, the status quo is kept \emph{unless} an alternative that is better in all scenarios exists. For our model-risk, it suggests \emph{inertial pricing}, one that is \emph{piece-wise free of model-risk pricing}, jumping between alternative models when appropriate, despite data that can resolve model-risk via regular inference. Asset-prices under max-min utility already have this characteristic, albeit with a built-in pessimism absent in inertial pricing. Both are viable in our setup, as 'models' are represented by probability laws that are equivalent before any risk-resolution. The theoretical and practical aspects of inertial pricing will be examined in a follow-up study.

%Our approach may be extended to model-risks characterisable as \((N+1)\)-tuples of independent binary outcomes \({B}^{N+1}:=(B_0B_1...B_N)\), \(N\in\mathbb{N}\), through iteration. It may be applied first to the \(2^{N}\) pairs of \({B}^{N+1}\)-sure, elementary, price processes, then to the resulting \(2^{N-i}\) pairs of \({B}^{N+1-i}\)-sure composites, \(i=1,...,N\), each discounting the \(i\)-tuple model-risk \((B_0B_1...B_{i-1})\).
\begin{appendix}
\appendix
\section*{APPENDIX}

\section{Properties of Inferential Hypothesis Testing}\label{infps}
\begin{enumerate}

\item{\label{takinglimit}\emph{Standard Continuous-Time Inference.} Given setup (Section \ref{assdef1})%and the fact that any continuous process of independent increments are essentially Wiener
, the log-LR process for testing time-homogeneous Wiener data of drift \(\mathbf{1}_{+}(B)r^Z\) and volatility \(\sigma^Z\) has the following form (\cite{PShir}): in terms of signal-to-noise \(\sigma^l=\frac{r^Z}{\sigma^Z}\), \begin{equation}\label{w}dl^{+/-}_\tau(B)=(-1)^{\mathbf{1}_{-}(B)}\frac{(\sigma^l)^2}{2}d\tau + \sigma^l dw_\tau,\end{equation}with \(\{w_\tau\}\) being a standard Wiener process. Time-dependence obtains via an absolutely continuous clock-change \(t\mapsto \tau(t)\): the image \(\{l^{+/-}_t\}:=\{l^{+/-}_{\tau(t)}\}\) is Wiener, with continuous time-varying parameters (e.g. \cite{clk}). It and the resulting belief process by Bayes' Rule (\ref{br1}) then read: in terms of a standard Wiener process \(\{w_t\}\) in \(t\)-time,\begin{align}
    \label{w'}
&dl^{+/-}_t(B)=%\mu^{l^{+/-}}_t(B)dt 
(-1)^{\mathbf{1}_{-}(B)}\frac{(\sigma^{l}_t)^2}{2}+ \sigma^{l}_tdw_t;%\text{ }\text{ }\mu^{l^{+/-}}_t(B):=;
\\\label{rawdrift}  
&\frac{d\pi^+_{t}(B)}{(\sigma^\pi_t)^{2}}= \big(\mathbf{1}_{+}(B)-\pi^+_t(B)\big)(\sigma^l_t)^2dt+\sigma^l_tdw_t.
\end{align}
}

\item{\label{convergent}\emph{Non-Resolution and Novikov's Condition.} For inference using equivalent Wiener-measure pair \({\mathbf{W}}_{+}\sim\mathbf{W}_{-}\), %on the space \(\mathcal{S}\) of data, one being a RNE version of the other,
the Radon-Nikodym Theorem ensures \(\lim_{t\to\infty} \log\frac{d{\mathbf{W}}_+|_t(\cdot)}{d\mathbf{W}_{-}|_t(\cdot)}<\infty\) and so a finite total variance for the log-LR process (\ref{w'}) and non-resolution. Risk-resolution (Item-\ref{inference}, Section \ref{assdef1}) thus equates to the LR process, the exponentiation of (\ref{w'})%, or equivalently (\ref{rawdrift})
, \emph{violating} Novikov's Condition, and vice versa.}

\item{\label{contim'}\emph{Regular Inference.} Uncertainty resolves \emph{predictably} and so continuously under regular inference. With i.i.d data resolution occurs at \(T_{\mathcal{B}}=\infty\). It can be brought forward under a clock-change that transforms processes on \((0,\infty)\) to those on some \((0,T_{\mathcal{B}}\le\infty)\). Note that log-LR processes (\ref{w'}) diverge as \(t\to T_{\mathcal{B}}\) while the associated beliefs (\ref{rawdrift}) stay finite and continuous. For assets with horizon or maturity \(T<T_{\mathcal{B}}\)}, or with \(T_{\mathcal{B}}\le T<\infty\) but a model-risk \(B\) that is \emph{not} \(\mathcal{F}^S_{T%_{\mathcal{B}}%
}\)-measurable, where \(\{\mathcal{F}^S_t\}\) is the natural filtration of asset-pricing (whatever it is), our model-risk pricing theory applies all the same on \([0,T_{\mathcal{B}}\wedge T)\), or else (when model-risk \(B\) is \(\mathcal{F}^S_{T%_{\mathcal{B}}%
}\)-measurable), arbitrage pricing applies.

%\item{\label{cmpltness}\emph{Complete Markets and Unhedgeable Model-Risks.} Continuous-time complete markets require continuity and so predictability, including at risk-resolution times. To be priced economically (Remark \ref{pxedrisk0}), model-risk \(B\) must be unhedgeable and so unresolvable pre-horizon: \(T<T_{\mathcal{B}}\le\infty\). However, since the main text only treats standard risk-resolutions, that is, \(T_{\mathcal{B}}=\infty\), the point is moot%, even for \(T=\infty\): the unheadgeability of model-risk \(B\) and market completeness are automatic under such standard conditions. 
%The mixed case of probabilistic resolution, \(\mathbb{P}\big(\{T_{\mathcal{B}}\le t+s\}|F_t\big)\in(0,1)\), \(\forall t<T\), \(\forall s\in(0,T-t]\), which may occur in continuous-time complete markets so long as \(T_{\mathcal{B}}\) is predictable, can be handled by treating only the unhedgeable part of the model-risk. 
%\emph{without any pre-horizon-\(T\) certainty of resolution}:\(T(B)\le T<\infty\).}

\item{\label{tails}\emph{Adjacency.} Given any pair of log-LR process \(\{l^{+/-}_t\}\) and \(\{\hat{l}^{+/-}_t\}\) of respective defining laws \(\{\mathbf{W}_+, \mathbf{W}_{-}\}\) and \(\{\hat{\mathbf{W}}_{+}, \hat{\mathbf{W}}_{-}\}\) such that equivalence \(\hat{\mathbf{W}}_{B}\sim \mathbf{W}_B\) holds on the space \(\mathcal{S}\) of total data, then they almost surely differ at most by a finite amount as \(t\to\infty\), since:\begin{equation}\label{logtrick}\hat{l}^{+/-}_t-l^{+/-}_t=\log{\frac{d\hat{\mathbf{W}}_{+}|_t}{d\mathbf{W}_+|_t}}-\log{\frac{d\hat{\mathbf{W}}_{-}|_t}{d\mathbf{W}_{-|_t}}},\text{ }\forall t\in\mathbb{R}^+,\end{equation}where the RHS is bounded (Item-\ref{convergent}), even if \(\{l^{+/-}_t\}\) and \(\{\hat{l}^{+/-}_t\}\) are resolving (divergent). 
\begin{lemma}\label{l1}
    No two resolving log-LR processes %\(\{{l}^{+/-}_t\}\) and \(\{\hat{l}^{+/-}_t\}\),
    can be %informationally equivalent, that is,
    related to each other deterministically and time-homogenously without being identical up to a constant difference.%, nor can any inferential processes % \(\{\pi^+_t\}\) and \(\{\hat{\pi}^+_t\}\) respectively derived from them. 
\end{lemma}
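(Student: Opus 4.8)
The plan is to show that if two resolving log-LR processes $\{l^{+/-}_t\}$ and $\{\hat l^{+/-}_t\}$ are related by a deterministic time-homogeneous map, then that map must be the identity up to an additive constant. First I would make precise what ``related deterministically and time-homogeneously'' means: there is a fixed (time-independent) Borel function $g:\mathbb{R}\to\mathbb{R}$ with $\hat l^{+/-}_t = g(l^{+/-}_t)$ for all $t$, almost surely. By Item-\ref{takinglimit}, both processes are (clock-changed) Wiener processes with drift, so in particular $\{l^{+/-}_t\}$ visits a set of positive Lebesgue measure of real values with positive probability; hence $g$ is pinned down on essentially all of $\mathbb{R}$ by the identity $\hat l^{+/-}_t = g(l^{+/-}_t)$, and since $\{\hat l^{+/-}_t\}$ is also a diffusion, $g$ must be continuous (indeed, applying It\^o's formula to $g(l^{+/-}_t)$ and matching it to the diffusion form of $\hat l^{+/-}_t$ forces $g$ to be twice differentiable with $g'$ nonvanishing, so $g$ is a smooth strictly monotone bijection of $\mathbb{R}$ onto its range).

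The next step is the crux: I would use the \emph{adjacency} estimate of Item-\ref{tails}. Since both pairs of defining laws are equivalent on the space of total data before resolution --- this is exactly the standing assumption that ``models'' are represented by mutually equivalent laws --- equation (\ref{logtrick}) gives that $\hat l^{+/-}_t - l^{+/-}_t$ stays almost surely bounded as $t\to\infty$, say $|\hat l^{+/-}_t - l^{+/-}_t|\le M(\omega)<\infty$ along the whole path. But resolving means $l^{+/-}_t\to (-1)^{\mathbf 1_-(B)}\cdot\infty$, so $l^{+/-}_t$ ranges over arbitrarily large values of both signs (depending on $B$) as we vary over the two $B$-sure measures, while $g(l^{+/-}_t) = \hat l^{+/-}_t$ must remain within a bounded window of $l^{+/-}_t$. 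Feeding $l^{+/-}_t\to+\infty$ and $l^{+/-}_t\to-\infty$ through this, we get $|g(x)-x|$ bounded for all large $|x|$; combined with the smoothness/monotonicity of $g$ on the compact middle range, $g(x)-x$ is a bounded function on all of $\mathbb{R}$.

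Finally I would upgrade ``$g(x)-x$ bounded'' to ``$g(x)-x$ constant''. Here I would again use It\^o's formula on $g(l^{+/-}_t)$: matching the quadratic-variation (volatility) terms of $g(l^{+/-}_t)$ and of $\hat l^{+/-}_t$ forces $g'(l^{+/-}_t)^2 (\sigma^l_t)^2 = (\hat\sigma^l_t)^2$, and matching the drift terms forces a compatibility relation between $g'$, $g''$ and the two signal-to-noise functions. Because both $l^{+/-}$ and $\hat l^{+/-}$ have the structure (\ref{w'}) --- drift equal to $\pm\tfrac12$ of the squared volatility --- this system forces $g'\equiv 1$ (the only self-consistent rescaling preserving the ``drift $=\pm\tfrac12$ variance'' form while having $g(x)-x$ bounded is $g'\equiv1$; any $g'\neq 1$ would make $g(x)-x$ grow linearly, contradicting boundedness). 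Hence $g(x) = x + c$ for a constant $c$, i.e. $\hat l^{+/-}_t = l^{+/-}_t + c$, which is the claim.

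The main obstacle I expect is the last step: carefully ruling out nonlinear deterministic reparametrisations $g$ that are bounded perturbations of the identity. The boundedness from adjacency rules out affine maps with slope $\neq 1$, but one must still argue no genuinely nonlinear $g$ survives; this is where the rigidity of the specific drift-to-volatility ratio in (\ref{w'}) (equivalently, Novikov's condition / the Girsanov structure of Item-\ref{convergent}) does the work, and I would want to present that matching argument cleanly rather than wave at it. A secondary subtlety is handling the clock-change: the relation should be stated in a clock-invariant way (on the path space of the log-LR, not pointwise in $t$), since two resolving processes can differ by a time change yet still be ``the same'' process in the relevant sense --- but a time change does not alter the range of visited values nor the adjacency bound, so the argument goes through verbatim.
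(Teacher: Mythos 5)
Your plan is the paper's plan: apply It\^o's formula to $g(l^{+/-}_t)$, use the rigidity that a log-LR process must have drift equal to $(-1)^{\mathbf{1}_{-}(B)}$ times half its squared volatility, and then invoke a limit condition to pin $g$ down to $\mathrm{Id}$ plus a constant. The gap is exactly the step you flag at the end: boundedness of $g(x)-x$ (from adjacency) plus smooth monotonicity does \emph{not} by itself force $g'\equiv 1$ --- e.g.\ $g=\mathrm{Id}+\arctan$ is a bounded, strictly increasing perturbation of the identity with $g'\neq 1$ everywhere --- so your parenthetical ``any $g'\neq 1$ would make $g(x)-x$ grow linearly'' only disposes of affine maps, and the nonlinear case is left unproven. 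The paper closes this by actually carrying out the drift/quadratic-variation matching: because the $(\sigma^l_t)^2$ factors cancel, one gets the autonomous ODE $g''=-(-1)^{\mathbf{1}_{+}(B)}(g'-1)g'$, whose explicit solutions are $g=g(0)\mp\log\big[g'(0)e^{\mp\mathbf{Id}}+1-g'(0)\big]$; for $g'(0)<1$ this converges to a finite constant in the resolving direction (contradicting $\lim_{t\to T_{\mathcal{B}}}\hat{l}^{+/-}_t=\pm\infty$, or equivalently your adjacency bound), and for $g'(0)>1$ it blows up at a finite argument, so $g'(0)=1$ and $g=\mathrm{Id}+g(0)$. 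Your adjacency-based boundedness is a legitimate substitute for the paper's divergence condition as the closing input, but the derivation and solution of the ODE is the indispensable core of the proof and must be written out, not waved at.
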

\begin{proof} %It is a feature of inference (and due to Bayes' Rule).
Given (\ref{w'}) and the restriction to twice-differentiability in Proposition \ref{p3}, such a map \(g:{l}^{+/-}_t\in\mathbb{R}\mapsto \hat{l}^{+/-}_t=g(l^{+/-}_t)\in\mathbb{R}\), \(\forall t\in(0,\infty)\), must by Ito's Lemma solve the ODE:\begin{equation}\label{ito1}g''= -(-1)^{\mathbf{1}_{+}(B)}(g'-1)g'.\end{equation}%The only \(B\)-independent map with \(g'\not=0\) is the identity map \(\mathbf{Id}\). With \(B\)-dependence (\(g\) having a role in \(B\)-detection),%
If \(B=+\): %wherever \(g'\not=0\), %\(g'=g'(0)e^{-g(0)}e^{g-\mathbf{Id}}\),\(g'e^{-g}=g'(0)e^{-g(0)}e^{-\mathbf{Id}}\) and 
\(g=g(0)-\log[g'(0)e^{-\mathbf{Id}}+1-g'(0)]\); so \(g'(0)=1\) as \(\lim_{t\to T_{\mathcal{B}}}\hat{l}^{+/-}_t=\infty=\lim_{t\to T_{\mathcal{B}}}{l}^{+/-}_t\)%\(\{\hat{l}^{+/-}_t\}\) and \(\{{l}^{+/-}_t\}\) are resolving%\(\lim_{t\to\infty}\hat{l}^{+/-}_t=\infty\) almost surely
. If \(B=-\): \(g=g(0)+\log[g'(0)e^{\mathbf{Id}}+1-g'(0)]\); so \(g'(0)=1\) as \(\lim_{t\to T_{\mathcal{B}}}\hat{l}^{+/-}_t=-\infty=\lim_{t\to T_{\mathcal{B}}}{l}^{+/-}_t\).
\end{proof}

%\begin{rem}\label{sameQ}
    %Since both \(\{\hat{\pi}^B_t\}\) and \(\{\hat{\pi}^B_t\frac{d\hat{Q}_B|_t}{dQ_B|_t}\}\) are \((\hat{\pi}^B_0{Q}_B)\)-martingales, Girsanov Theorem requires \(Q_B=\hat{Q}_B\).%Given identical LR up to a constant factor, the inference process \(\{\hat{\pi}^+_t\}\), a \((\hat{\pi}^B_0\hat{\mathbf{W}}_B)\)-martingale by definition, is also a \((\hat{\pi}^B_0\mathbf{W}_B)\)-martingale (by (\ref{br1})); thus \(\hat{\mathbf{W}}_{B}=\mathbf{W}_B\) (Martingale Representation Theorem and so the completeness of the Wiener basis). That is, only tests based on identical underlying beliefs can be 'informationally equivalent'.% That is, process \(\{\hat{\pi}^b_t\}\) given \(\{F^{\pi}_t\}\), as the market-price of an asset of value \(1\) in case of \(\{B=+\}\) and \(0\) otherwise, has equivalent martingale measure \(\hat{\pi}^B_0\mathbf{W}_B|_{F^{\pi}_n}\) and \(\hat{\pi}^B_0\hat{\mathbf{W}}_B|_{F^{\pi}_n}\). Such a market is complete for any fixed \(n\in\mathbb{N}\), we thus have \(\hat{Q}_B|_{F^{\pi}_n}=Q_B|_{F^{\pi}_n}\), \(\forall n\in\mathbb{N}\). The same applies in continuous time.\end{rem}
    }
\end{enumerate}

\section{PROOF OF PROPOSITION \ref{p3}}\label{proofp3} 

\begin{proof} \emph{Part I. Viable model-risk only pricing \(\{S^{(0|A)}_{t}\}:=\{\Sigma_BA^B_tY^B_t\}=\{Y^{-}_t\}+\{A^+_tY^{\Delta}(t)\}\), given twice-differentiable coefficients \(\{A^+_t(\pi^+_t)\}\), must have the canonical form%: \(\Pi^B_0\mathbf{W}_B^{T}\), \(\Pi^B_0\sim\pi^B_0\)
.}

Under reference belief \({\pi}^B_0\mathbf{W}_B^{T}\), with \(d\mathbf{W}^{T}_{-,t}=\sigma^Z_tdw^{Z}_{t}+\sigma^D_tdw^{D}_{t}\), \(\{dw^{Z}_{t}\}\) and \(\{dw^{D}_{t}\}\) being independent standard Wiener processes (Item-\ref{Bdata}\&\ref{inference}, Section \ref{assdef1}), we have:\begin{align}\label{dYtermsdAfactor}
    dS^{(0|A)}_{t}=\big[(&\mathbf{1}_+-A^{+}_t)\cdot r^{Z\Delta}_t dt\text{ }+\text{ }\sigma^Z_t dw^{Z}_{t}\big]\text{ }+\text{ }\\\label{inferencedrift} (\sigma^{\pi}_t)^2Y^{\Delta}(t)%\partial _{\pi}
    (A^+_t)'_{\pi}\bigg[(&\mathbf{1}_+-\pi^{+}_{t})\cdot(\sigma^l_t)^2dt+(\sigma^{\pi}_t)^2\frac{%\partial_{\pi}^2
    (A^+_t)''_{\pi}}{2%\partial _{\pi}
    (A^+_t)'_{\pi}}\cdot(\sigma^l_t)^2dt\text{ }+\text{ }\sigma^l_tdw^{}_t\bigg],
    %\\\label{noiseterms}\text{ }+\text{ }(\sigma^{\pi}_t)^2Y^{\Delta}(t)\partial _{\pi}A^+_t\times\text{ }\sigma^l_t&dw^{}_t,%
\end{align}with model-drift \(r^{Z\Delta}_tdt=-{d}Y^{\Delta}(t)\) (Item-\ref{constantsignstate}, Section \ref{assdef1}) and \((\sigma^l_t)^2:=(\sigma^{lZ}_t)^2+(\sigma^{lD}_t)^2\), where \(\sigma^{lZ}_t:={r^{Z\Delta}_t}/{\sigma^Z_t}\) ('signal-to-noise') stems from data \(\{Z_t(B)\}\) (\cite{PShir}), with \(\sigma^{lD}_t\) likewise from \(B\)-conditionally independent data \(\{D_t(B)\}\), so \(\sigma^l_t dw^{}_t:=\sigma^{lZ}_tdw^{Z}_{t}+\sigma^{lD}_tdw^D_{t}\).

Without risk-pricing, \(\{A^+_t\}=\{\pi^+_t\}\), we have trivial viability under reference belief \(\pi^B_0\mathbf{W}^T_B\):\begin{equation}\label{noriskpricing}
    dS^{(0|\pi)}_{t}=\big[(\mathbf{1}_{+}-\pi^{+}_t)\cdot r^{Z\Delta}_tdt+\sigma^Z_tdw^{Z}_{t}\big]+(\sigma^{\pi}_t)^2Y^{\Delta}(t)\big[(\mathbf{1}_{+}-\pi^{+}_t)\cdot(\sigma^l_t)^2%\frac{r^2(t)}{(\sigma^Z_t)^2}%
    dt+\sigma^l_tdw^{}_{t}\big].
\end{equation}

With viable model-risk only pricing, under a RNE measure \(\hat{\pi}^B_0\hat{\mathbf{W}}_B^{T}\sim\pi^B_0\mathbf{W}_B^{T}\), we have \(S^{(0|A)}_t=\Sigma_B\hat{\pi}^B_t\hat{S}^B_t\), and the following, by (\ref{rpntilde}-\ref{ktilde}) and (\ref{k}),\begin{equation}\label{mixing}\pi^+_t-A^+_t%\equiv k^A_n\sigma^{\pi}_t|{Y}^{\Delta}_t|%
=k^{{A}}_t\sigma^{\pi}_t=k^{\hat{\pi}}_t\sigma^{\pi}_t+\frac{\Sigma_B\hat{\pi}^B_t\hat{RP}^B_t}{{Y}^{\Delta}(t)}=(\pi^+_t-\hat{\pi}^+_t)+\frac{\Sigma_B\hat{\pi}^B_t\hat{RP}^B_t}{{Y}^{\Delta}(t)}.
    \end{equation}\begin{rem}\label{orderofthings}
     Reference model-risk beliefs \(\{{\pi}^+_t\}\) and any of its RNE \(\{\hat{\pi}^+_t\}\) are adjacent (Item-\ref{tails}, Appendix \ref{infps}), implying that at any fixed \(t<T\) for \(\sigma^{\pi}_t\ll\frac{1}{2}\), by (%\ref{canon1}-
     \ref{canon2}) when \(\{{\hat{\pi}}^+_t\}=\{\Pi^+_t\}\), we have \(\mathcal{O}({\pi}^+_t-{\hat{\pi}}^+_t)=%\mathcal{O}({\pi}_t-{{\Pi}}_t)=%
     \mathcal{O}((\sigma^{\pi}_t)^2)\) and so \(\mathcal{O}({k}^{\hat{\pi}}_t)=\mathcal{O}(\sigma^{\pi}_t)\) for any RNE price-of-model-risk (\ref{ktilde}). In turn, by Lemma \ref{l1}, the LHS and RHS of (\ref{mixing}) being equal implies \(\mathcal{O}({\hat{RP}^+_t})=\mathcal{O}((\sigma^{\pi}_t)^2)\) and \(\mathcal{O}({k}^{A}_t)=\mathcal{O}(\sigma^{\pi}_t)\), as well as \(\mathcal{O}(%\partial_{\pi}
     (A^+_t)'_{\pi})=\mathcal{O}(%\partial^2_{\pi}
     (A^+_t)''_{\pi})=\mathcal{O}(1)\) and \(\mathcal{O}(d{\hat{RP}^+_t})=\mathcal{O}(d(\sigma^{\pi}_t)^2)=\mathcal{O}((\sigma^{\pi}_t)^4)\).
\end{rem}

Consider first the case of trivial \(\{D_t\}\), so that the drift of \(\{dS^{(0|A)}_{t}\}\) is nil in expectation under some RNE measure \(\hat{\pi}^B_0\hat{W}_B^Z|_T\) satisfying \(d\hat{W}^{Z}_{B,t}=%\hat{r}_{B,\text{ }t}dt+dW^{(T)}_{-,\text{ }t}=%
\hat{r}_{B,\text{ }t}dt+\sigma^Z_tdw^{Z}_{t}\), \(\hat{r}_{B,\text{ }t}dt=-d\hat{RP}^{B}_t\), \(t<T\). Being RNE, the \((\hat{\pi}^B_0\hat{W}_B^Z|_T)\)-expectation of (\ref{dYtermsdAfactor}) must offset that of (\ref{inferencedrift}). The former reads off easily and may be short-handed into \((-\nu_t)r^{Z\Delta}_t:=(\hat{\pi}^+_t-A^{+}_t)r^{Z\Delta}_t-\Sigma_B\hat{\pi}^B_t\hat{r}_{B,\text{ }t}\). Viability demands%note \(\hat{\pi}^+_t-\pi^{+}_{t}=(\hat{\pi}^+_t-A^{+}_{t})+(A^{+}_{t}-\pi^{+}_{t})\),
:
%\begin{align}\label{nildrift}\frac{-\nu_t}{r(t)}%-(\hat{\pi}^+_t-A^{+}_t)=(\sigma^{\pi}_t)^2Y^{\Delta}(t)\partial _{\pi}A^+_t\cdot\frac{(\sigma^{lZ}_t)^2}{r(t)}\bigg[(\hat{\pi}^+_t-\pi^{+}_{t})+(\sigma^{\pi}_t)^2\frac{\partial_{\pi}^2A^+_t}{2\partial _{\pi}A^+_t}-\frac{\Sigma_B\hat{\pi}^B_t\hat{r}_{B,\text{ }t}}{r(t)}\bigg].\end{align}%=\frac{\Sigma_B\hat{\pi}^B_t\hat{RP}^B_t}{{Y}^{\Delta}_t}-(\pi^{+}_{t}-A^{+}_{t})\) ((\ref{mixing})), the above becomes:
\begin{align}\label{nildrift'}
%\frac
{\nu_t}%{r^{Z\Delta}_t}
=%\frac{\Sigma_B\hat{\pi}^B_t\hat{r}_{B,\text{ }t}}{r(t)}-\frac{\Sigma_B\hat{\pi}^B_t\hat{RP}^B_t}{{Y}^{\Delta}_t}\varpropto%\frac{Y^{\Delta}(t)(\sigma^{\pi}_t)^2\frac{r(t)}{(\sigma^Z_t)^2}\partial _{\pi}A^+_t}{1+Y^{\Delta}(t)(\sigma^{\pi}_t)^2\frac{r(t)}{(\sigma^Z_t)^2}\partial _{\pi}A^+_t}%
(\sigma^{\pi}_t)^{2}Y^{\Delta}(t)%\partial _{\pi}
(A^+_t)'_{\pi}\frac{(\sigma^{lZ}_t)^2}{r^{Z\Delta}_t}\bigg[(A^{+}_{t}-{\pi}^+_t)+(\sigma^{\pi}_t)^2\frac{%\partial_{\pi}^2
(A^+_t)''_{\pi}}{2%\partial _{\pi}
(A^+_t)'_{\pi}}-%\frac
{\nu_t}%{r^{Z\Delta}_t}
\bigg].\end{align}
%\begin{align}\label{nildrift'}\frac{\Sigma_B\hat{\pi}^B_t\hat{r}_{B,\text{ }t}}{r(t)}-(\hat{\pi}^+_t-A^{+}_t)=\frac{Y^{\Delta}(t)(\sigma^{\pi}_t)^2\frac{r(t)}{(\sigma^Z_t)^2}\partial _{\pi}A^+_t}{1+Y^{\Delta}(t)(\sigma^{\pi}_t)^2\frac{r(t)}{(\sigma^Z_t)^2}\partial _{\pi}A^+_t%\times(\frac{\Sigma_B\hat{\pi}^B_t\hat{r}_{B,\text{ }t}}{r(t)}-(\hat{\pi}^+_t-A^{+}_{t})}\times\big[(\sigma^{\pi}_t)^2\frac{\partial_{\pi}^2A^+_t}{2\partial _{\pi}A^+_t}-({\pi}^+_t-A^{+}_{t})\big].\end{align}
By Remark \ref{orderofthings}, \(\mathcal{O}(LHS)=\mathcal{O}((\sigma^{\pi}_t)^{2})\) and
\(\mathcal{O}(RHS)=\mathcal{O}((\sigma^{\pi}_t)^{4})\), \(\forall \sigma^{\pi}_t\ll\frac{1}{2}\), so nil-drift requires each side above to vanish on its own, meaning:\begin{equation}\label{keyA-eq}
    \frac{%\partial_{\pi}^2
    (A^+_t)''_{\pi}}{2%\partial _{\pi}
    (A^+_t)'_{\pi}}=\frac{{\pi}^+_t-A^{+}_{t}}{(\sigma^{\pi}_t)^{2}}.
\end{equation}It is solved by \(\{A^+_t\}=\{\Pi^+_t\}\), where \(\{\Pi^+_t\}\) is any model-risk inference based on any measure of the form \(\Pi^B_0W_B^{Z}|_T\), \(\Pi^B_0\sim\pi^B_0\).%The same can be obtained by setting the LHS of (\ref{nildrift'}) to nil.% indeed, solving \(\nu_t=0\) is easier, given (\ref{mixing}) and Remark \ref{orderofthings}.

Adding \(B\)-conditionally independent data \(\{D_t\}\) under law \(W_B^{D}|_T\), the argument above applies unchanged to the dynamics of \(\{dS^{(0|A)}_{t}\}\), (\ref{dYtermsdAfactor}-\ref{inferencedrift}), under the reference belief \({\pi}^B_0\mathbf{W}_B^{T}\) and some RNE measure \(\hat{\pi}^B_0\hat{\mathbf{W}}_B^{T}\) such that (\ref{mixing}) holds and \(d\hat{\mathbf{W}}^{T}_{B,t}=%\hat{r}_{B,\text{ }t}dt+dW^{(T)}_{-,\text{ }t}=%
\hat{r}_{B,\text{ }t}dt+d\mathbf{W}^{T}_{-,t}\), \(\hat{{r}}_{B,\text{ }t}dt=-d\hat{{RP}}^{B}_t\)%To conclude, the RNE measure \(\{S^{(0|A)}_{t}\}\) must have the canonical form: 
.\\
%have an additional independent noise term \((\sigma^{\pi}_t)^2Y^{\Delta}(t)\partial _{\pi}A^+_t\cdot\sigma^{lD}_tdw^D_{t}\) (), and the new martingale measure (unique) can be read off\footnote{Alternatively, by writing , a viability requirement parallel to (\ref{nildrift'}) albeit with an extra drift contribution from \(\{D_t\}\) on the RHS can be derived; the same argument and an application of Lemma \ref{l1} then lead to the same condition (\ref{keyA-eq}) and conclusion.}
%\begin{align}\label{dYtermsdAfactor'}dS^{(0|A)}_{t}(B)=&\sigma^Z_tdW^{Z}_{-,\text{ }t}+r(t)\big(\mathbf{1}_{\{B=+\}}-A^{+}_t\big)dt+Y^{\Delta}(t)(\sigma^{\pi}_t)^2\times\partial _{\pi}A^+_t\big[\\\label{dAdtterms'}&(\frac{r(t)}{\sigma^Z_t})^2(\mathbf{1}_{\{B=+\}}-\pi^{+}_{t})dt+(\sigma^{\pi}_t)^2(\frac{r(t)}{\sigma^Z_t})^2\frac{\partial_{\pi}^2A^+_t}{2\partial_{\pi}A^+_t}dt+\frac{r(t)}{\sigma^Z_t}dW^Z_{-,\text{ }t}\big].\end{align}

\emph{Part II. The viability of canonical price process \(\{S^{(RP|{\Pi})}_t\}=\{{S}^{(0|{\Pi})}_t\}-\{\Sigma_B{RP}^B(t)\Pi^B_t\}\).}

 Under the martingale measure \(\Pi^B_0\mathbf{W}_B^{T}\) of process \(\{{S}^{(0|{\Pi})}_t\}\), process \(\{S^{(RP|\Pi)}_{t}\}\) is a martingale of form (\ref{noriskpricing}) plus the extra drift \(R^B_tdt=-{d}RP^{B}(t)\) due to \(B\)-sure risk-pricing:\begin{equation}\label{'no'riskpricing}
    dS^{(RP|\Pi)}_{t}=\big[(\mathbf{1}_{+}-\Pi^{+}_t)\cdot\check{r}^{Z\Delta}_tdt+\sigma^Z_tdw^{Z}_{t}\big]+S^{\Delta}(t)(\sigma^{\Pi}_t)^2\big[(\mathbf{1}_{+}-\Pi^{+}_t)\cdot(\sigma^l_t)^2dt+\sigma^l_tdw^{}_{t}\big]+R^{(\cdot)}_tdt,
\end{equation}where \(\check{r}^{Z\Delta}_tdt:=-{d}{}S^{\Delta}(t)\), with \(S^\Delta(t)\equiv Y^\Delta(t)-RP^+(t)+RP^-(t)\) (by (\ref{walk}-\ref{consstncy}). %\equiv r(t)dt-[R^+(t)-R^-(t)]dt%.
As such, it is viable if drift \(\{R^B_tdt\}\) can be 'absorbed' into the known \((\Pi^B_0\mathbf{W}_B^{T})\)-martingale \(\{d{S}^{(0|{\Pi})}_t\}\), that is, if\[\frac{{R^{\pm}_t}/{S^{\Delta}(t)}}{{\sigma^Z_t}/{S^{\Delta}(t)}+{(\sigma^{\Pi}_t)^2{\sigma^{lD}_t}\sqrt{1+\text{F}^{Z/D}_t}}}\]is square-integrable up to horizon \(T\) (Novikov's Condition), where \((\sigma^l_t)^2\equiv(\sigma^{lD}_t)^2(1+\text{F}^{Z/D}_t)\) is written in terms of the relative variance \(\text{F}^{Z/D}_t:=(\sigma^{lZ}_t)^2/(\sigma^{lD}_t)^2\) of data \(\{Z_t\}\) to \(\{D_t\}\). 
%Adding data \(\{D^t\}\)%\((Q_B^{(T)},\{D^t\}\))-martingale of the form \((\sigma^{\Pi}_t)^2Y^{\Delta}(t)\big[(\mathbf{1}_{\{B=+\}}-\Pi^{+}_t){(\sigma^{lD}_t)^2}dt+{\sigma^{lD}_t}dw^{D}_{t}\big]\) to the equation of motion of model-risk only price process \(\{S^{(0|\Pi)}_{t}\}\), whose new martingale measure given dataflow \((\{Z^t\},\{D^t\})\) may then be read off: \(\Pi^B_0W_B^{(T)}Q_B^{(T)}\).
\end{proof}

%\begin{rem}\label{a-mart}Equation (\ref{keyA-eq}) demands that the process \(\{A^+_t\}\), whose dynamics follow (\ref{inferencedrift}) up to a \(Y^{\Delta}(t)\)-factor, must be itself viable, i.e. have an equivalent martingale measure.\end{rem}

\section{THE MOMENTUM AND LOW-RISK FORMULAE}\label{moloder}
For the assets of Section \ref{app0}-\ref{app1}, all the distributions required derive from the underlying log-LR process \(\{l^{1/{\text{{\bf{0}}}}}_t\}\), of parameter \(\sigma^l\), for which log-LR levels \(l^{1/{\text{{\bf{0}}}}}_t\) follow the Normal distribution:\begin{equation}\label{gauss}
\mathcal{N}\bigg(\mu^l_{[t]}(B)=\frac{(-1)^{B+1}}{2}(\sigma^l)^2t,\text{ } (\sigma^l_{[t]})^2= (\sigma^l)^2t\bigg),\text{ }B\in\{0,1\}.
\end{equation}The distributions of RNE beliefs \(\{\Pi^1_t\}\) and so of prices (\ref{based}) then derive via (\ref{br1}) by change of variable: for \(\Pi^1_t=v\) given \(l^{1/{\text{{\bf{0}}}}}_t=l\), we have \(dl=(v\underline{v})^{-1}dv\), since:\begin{align}\label{chngvarb}
&l=H_{\Pi\pm}+\log\circ\text{ }O_f[v],\\
&H_{\Pi\pm}:=\log\circ\text{ } O_f[\underline{\Pi}^1_{0\pm}]=\log\circ\text{ }O_f[\underline{\mathbf{p}}^1_0]+\log(\rho K^{\pm1})=:H_{\mathbf{p}}+\log(\rho K^{\pm1}),
%\\&H_{\mathbf{p}}:=\log\circ\text{ }O_f[\underline{\mathbf{p}}^1_0],
\end{align}where recall (\ref{rhok}), \(O_f[\Pi^1_{0\pm}]=O_f[\mathbf{p}^1_0]/(\rho K^{\pm1})\) given \(sign[1{\text{{\bf{0}}}}]=\pm\). The \(H\)-variables above are \emph{inferential milestones}: \(H_{\Pi\pm}\) is the log-LR hurdle for event \(\{\Pi^1_t\ge\frac{1}{2}\}\), and \(H_{\mathbf{p}}\), for \(\{\mathbf{p}^1_t\ge\frac{1}{2}\}\).

\subsection{Tracking Inferential Progress and Bias Dominance}
The degree of \(B\)-certainty as data accumulate may be assessed in the standard way for Normal distributions. Its objective level is tracked by some \(|C^{\mathbf{p}}_{t}|\), where, with \(t_{\mathbf{p}}:=2H_{\mathbf{p}}/(\sigma^l)^2\),\begin{align}
C^{\mathbf{p}}_{t}(B):=\frac{H_{\mathbf{p}}-\mu^l_{[t]}(B)}{\sigma^l_{[t]}}=\frac{1}{2}\sigma^l_{[t]}\cdot\big({(-1)^{B}}+\frac{t_{\mathbf{p}}}{t}\big).
\end{align}Certainty level \(|C^{\mathbf{p}}_{t}|\) is high if \(\sigma^l_{[t]}\ll1\) (little data) or \(\sigma^l_{[t]}\gg 1\) (lots of data); it bottoms out at \(0\) if and when \(H_{\mathbf{p}}=\mu^l_{[t]}(B)\) has a solution, that is, if and when \(t=(-1)^{B+1}t_{\mathbf{p}}\) and \(B=1\).

For the price-implied RNE inference process \(\{\Pi^1_t\}\), the same indicator, denoted \(|C^{\Pi}_{t\pm}|\), reads:\begin{align}
C^{\Pi}_{t\pm}(B):=&\frac{H_{\Pi\pm}-\mu^l_{[t]}(B)}{\sigma^l_{[t]}}=\frac{1}{2}\sigma^l_{[t]}\cdot\big({(-1)^{B}}+\frac{t_{\Pi\pm}}{t}\big)\\
=&\frac{1}{2}\sigma^l_{[t]}\cdot(\frac{t_{\rho}}{t}\pm\frac{t_K}{t})+C^{\mathbf{p}}_{t}(B),
\end{align}where \(t_{\Pi\pm}=t_{\rho}\pm t_K+t_{\mathbf{p}}\), with \(t_{\rho}:={2\log(\rho)}/{(\sigma^l)^2}>0\) and \(t_{K}:={2\log K}/{(\sigma^l)^2}>0\) (recall \(K-1\not\gg1\) in general and \(K\in(1,2)\) in competitive markets). The pattern of \(|C^{\Pi}_{t\pm}|\) parallels that of \(|C^{\mathbf{p}}_{t}|\); it has \(0\) as a minimum if and when \(t=(-1)^{B+1}t_{\Pi\pm}>0\).

Bias thus creates a 'burden of proof' \(t_\rho\), and \(t_\rho/t\) tracks how it is overcome. We consider cases of positive objective hurdle \(H_{\mathbf{p}}>0\) and bias domination \(H_{\Pi\pm}-H_{\mathbf{p}}\gg0\), so \(t_{\mathbf{p}}>0\) and \(t_{\rho}\gg t_K\), where 'objective burden' \(t_{\mathbf{p}}\) is 'non-extreme', to exclude easy inference or profits. 
\subsection{The Window of Opportunity}\label{win}
 Excess-profit opportunities occur when data become sufficient for the conditional probabilities of change to be meaningful but insufficient for subjective inference to overcome bias: \begin{equation}\label{window}\big{\{}\frac{t_{\mathbf{p}}}{t}\ll1 \big{\}}
\text{ }\bigcap\text{ }\big{\{}\frac{t_{\rho}}{t}\gg1\big{\}};\end{equation}the first demand means data dominance over objective hurdle and the second, bias over data.

\begin{rem}\label{clock}
    Take profitable events \(\{O_f[{\Pi}^1_t]=\rho^{-1}\}\) and \(\{O_f[{\Pi}^1_{t\pm u}]=\rho^{-1}\}\) (see (\ref{lowrisk})), with \(t\) fixed inside the window above and \(u>0\) large enough for the second event to be outside the window. Their respective probability densities are \(\propto (t)^{-1/2}e^{-|C^\mathbf{p}_{t}|^2/2}\) and \(\propto (t\pm u)^{-1/2}e^{-|C^\mathbf{p}_{(t\pm u)}|^2/2}\).
    %and \(\propto (n\pm u)^{-1/2}e^{-(t_{\rho}+C^{\mathbf{p}}_{(n+u)})^2/2}\), ignoring \(\log K\)-terms given \(\mathcal{O}(K-1)\le1\).%
    Thus, in-window events dominate, with a log-LR of \(\log[1\pm u/t]+|C^{\mathbf{p}}_{(t\pm u)}|^2-|C^\mathbf{p}_{t}|^2\gg1\), which diverges
    %and \(\sqrt{1\pm u/n}\cdot e^{t_\rho(t_\rho+C^{\mathbf{p}}_{(n\pm u)}+C^{\mathbf{p}}_{n})/2}\gg1\); they%
    as \((1-u/t)\to 0\) or \((1+u/t)\to\infty\), \(t\) fixed. 
\end{rem}
%whose maximum is \(\sim(H^\mathbf{p})^{-1/2}\)whose maximum is \(\sim(4\log\rho+2H^\mathbf{p})^{-1/2}\). Their likelihood ratio is thus \(\sim2\sqrt{\log\rho/H^\mathbf{p}}\).
%and \(u\), large enough to be outside, more specifically, \(N_{t+u}\le 1\), i.e. \(u\ge N\), the likelihood-ratio of the two types of events diminishes exponentially in \(u\) as follows, with \(N_t\gg1\) and \(t\) fixed:\begin{equation}\sqrt{1+u/t}\cdot\exp[\frac{(\sigma^l)^2u}{8}]>\sqrt{1+N_t}\exp[\frac{(\sigma^l)^2N}{8}]=\sqrt{2\rho}.\end{equation}
%Likewise, for event \(\{O_f[{\Pi}^1_t]=\rho\}\), with \(t\) as above but also with \(2N\gg t\), while with \(t+u\) such that \(2N\ll (t+u)\)
%\begin{equation}\frac{1}{2}\log[1+u/t]+\frac{1}{2}(\sigma^l)^2\big(\frac{t+u}{4}-N\cdot N/t\big)\end{equation}
\subsection{Momentum and Volatility Mixtures}\label{momixfun}
The probabilities relevant to volatility-conditioned mean-excess (\ref{volx}) obey the rule:\begin{align}\label{lawP}
\mathbb{P}_t(\Pi^1_t=v)\mathbb{P}_t(\pm|\text{ }\Pi^1_t=v)\equiv\mathbb{P}_t(\pm)\mathbb{P}_t(\Pi^1_t=v|\text{ }\pm),
\end{align}with \(\mathbb{P}_t(\Pi^1_t=v|\text{ }\pm)\) from Normal distribution (\ref{gauss}) via (\ref{chngvarb}). The likelihood-ratio of events \(\{\Pi^1_t=\underline{v}\}\) \emph{vs} \(\{\Pi^1_t=v\}\), \(v\in(0,\frac{1}{2}]\), given the \(sign[1{\text{{\bf{0}}}}]=\pm\) of potential change, is: \begin{equation}\label{rv}
R_{t|\pm}(\underline{v}/v)(B):=\frac{\mathbb{P}_t(\underline{v}|\pm)}{\mathbb{P}_t(v|\pm)}(B)=({\underline{v}}/{v})^{-(t_{\Pi\pm}/{t})-(-1)^B}%\cdot(\frac{v}{\underline{v}})^{(-1)^B}%
,\text{ }B\in\{0,1\}
.
\end{equation}Note that \(\{\Pi^1_t=v\}\)-events dominate when bias does (i.e. \(t_{\rho}/{t}\gg 1\)), regardless of \(B\)-outcomes.

To compute (\ref{volx}) we also need the mix-ratio function \(M_{t|v}:={\mathbb{P}_t(+|v)}/{\mathbb{P}_t(-|v)}\) of the event-set \(\{\Pi^1_t=v\}\), \(v\in(0,1)\), between member-events with \(\{sign[1{\text{{\bf{0}}}}]=+\}\) and with \(\{sign[1{\text{{\bf{0}}}}]=-\}\). Given (\ref{lawP}), we have \(M_{t|v}= {\mathbb{P}_t(v|+)}/{\mathbb{P}_t(v|-)}\), since \(\mathbb{P}_t(\pm)=\frac{1}{2}\) by setup. Hence:
\begin{align}\label{momix}
M_{t|v}(B)
%&=K^{(-1)^{B+1}}\cdot K^{-2(\log\frac{v}{\underline{v}}+H_{\pi})/(\sigma^l)^{2}/n},\\%
&= %K^{(-1)^{B+1}}\cdot
(\frac{\rho v}{\underline{v}})^{-t_K/t}K^{-(t_{\mathbf{p}}/t)-(-1)^B};\end{align}%where \(H_{\pi}:=\log\circ\text{ }O_f[\underline{\pi}^1_0]= H_{\mathbf{p}}+\log\rho\).%
and likewise the volatility-conditioned ratio \(M_{t|v\underline{v}}:={\mathbb{P}_t(+|v\underline{v})}/{\mathbb{P}_t(-|v\underline{v})},\text{ }v\in(0,\frac{1}{2}]\):\begin{equation}\label{volmix}
M_{t|v\underline{v}}= M_{t|v}\frac{1+R_{t|+}(\underline{v}/v)}{1+R_{t|-}(\underline{v}/v)}.
\end{equation}

The source of the conditionalities above is model-risk pricing \(K\): if \(K=1\) (unpriced model-risk), conditional mixes equal the unconditional; if \(K\in(1,2)\) (competitively priced model-risk), conditional mixes are perturbations of the unconditional.

\begin{rem}\label{negmo}
    Both mix-ratios (\ref{momix}-\ref{volmix}) are monotone declining in \(v\), and peak-volatility (\(v=\frac{1}{2}\)) brings \(M_{t|\frac{1}{2}}= M_{t|\frac{1}{2}\frac{1}{2}}\approx\rho^{-t_K/t}%=K^{(-1)^{B+1}}K^{1-(t_{\mathbf{p}}/n)}
    =\exp[{-(t_\rho/t)\log K}]\). That is, for \(\rho\gg1\), in the window (\ref{window}) of opportunity, when trading events are most abundant, the observed mix at high volatility can be highly negative vs the unconditional level. Mix reverts to its unconditional level as \(t\to\infty\) when model-risk resolves. Any uneven unconditional background merely introduces a {constant factor} in (\ref{momix}-\ref{volmix}); all above statements remain valid.
\end{rem}

%We now have all that is needed, in closed analytic forms, for the calculation of  (\ref{volx}).

\subsection{Peak-Reward Location and Size for Volatility-Conditioned Trading}\label{locale}

Focusing first on the \(\{\Pi^1_t=v\}\)-contributions to volatility-conditioned mean-excess (\ref{volx}), given mix-function (\ref{momix}) and condition (\ref{window}), it has a unique optimum, (\ref{lowrisk}), with \(\mathcal{O}(\frac{(K-1)H_{\mathbf{p}}}{(\sigma^l)^{2}t})\)-errors; the accuracy of solution (\ref{lowrisk}) improves with data accumulation.

Further, given (\ref{rv}) and under condition (\ref{window}), the \(\{\Pi^1_t=\underline{v}\}\)-contributions to (\ref{volx}) are no more than \(\mathcal{O}(\rho^{-t_{\rho}/{t}})\). That is, the leading order solution (\ref{lowrisk}) is not affected by these contributions under \( t_{\rho}/t\gg1\) (i.e. bias domination).

\begin{rem}\label{uneven'}
    For uneven unconditional-mix, \(\mathbb{P}_t(\pm)\not=\frac{1}{2}\), the same features apply. For unevenness that tilts negatively (positively), the location of peak-excess moves left (right) vs the even solution (\ref{lowrisk}), and its size, down (up).%Peak-location is capped by \(v^{+mo}_{max}\) ((\ref{vmo})), and peak-size, by \(rp^{+mo}_{max}\) ((\ref{vmo})). 
\end{rem}

Without uncertainty or bias, \(\rho=1\), both the \(\{\Pi^1_t=v\}\)- and \(\{\Pi^1_t=\underline{v}\}\)-contributions to (\ref{volx}) are invariant under \(v\leftrightarrow\underline{v}\) switching (see (\ref{mox}) and (\ref{rv}-\ref{volmix})), so their derivatives both vanish at \(v=\frac{1}{2}\), thus confirming peak-volatility as the location of peak-reward and 'high risk high reward' under classical conditions.
\end{appendix}

%\bibliography{RNEbinMRsk}

\end{document}